\newcommand{\be}{\begin{equation}}
\newcommand{\ee}{\end{equation}}
\newcommand{\ba}{\begin{array}}
\newcommand{\ea}{\end{array}}
\newcommand{\bea}{\begin{eqnarray}}
\newcommand{\eea}{\end{eqnarray}}
\newenvironment{acknowledgement}%       New acknowledgement environment
\newtheorem{dfn}{Definition}
\newtheorem{lem}{Lemma}
\newtheorem*{lem*}{Lemma}
\newtheorem{alg}{Algorithm}
\newtheorem{rem}{Remark}
\newtheorem{conj}{Conjecture}
\newtheorem{thm}{Theorem}
\newtheorem*{thm*}{Theorem}
\newcommand{\footremember}[2]{%
    \footnote{#2}
    \newcounter{#1}
    \setcounter{#1}{\value{footnote}}%
}
\newcolumntype{C}[1]{%
 >{\vbox to 4ex\bgroup\vfill\centering}%
 p{#1}%
 <{\egroup}}  
\begin{document}

\title{Quantum supremacy and random circuits}
\author{Ramis Movassagh  \footremember{ibmcambridge}{IBM Research, Cambridge, MA 02142, U.S.A., Email: ramis@us.ibm.com}
}

\date{\today}

\maketitle

\begin{abstract}
\noindent As Moore's law reaches its limits, quantum computers are
emerging with the promise of dramatically outperforming classical
computers. We have witnessed the advent of quantum processors with
over $50$ quantum bits (qubits), which
are expected to be beyond the reach of classical simulation~\cite{arute2019quantum,huang2020mermin,dalzell2018many}. 
Quantum supremacy is the event at which the old Extended Church-Turing Thesis is overturned: A quantum computer performs a task that is practically impossible
for any classical (super)computer.
%Quantum supremacy is the event at which a quantum computer performs a task that is practically impossible
%for any classical (super)computer, and violates the old extended Church-Turing thesis.
%But are such quantum computers inherently more powerful?
%An imperative near-term goal is to demonstrate that a quantum computer of moderate size can perform a task that is practically impossible
%for any classical (super)computer$-$this event is termed quantum
%supremacy. 
The demonstration requires both a solid theoretical
%provable
guarantee and an experimental realization. The lead candidate is Random
Circuit Sampling (RCS), which is the task of sampling from the output
distribution of random quantum circuits. 
%This task was posited hard
%to perform (even approximately) on any classical computer and is
%equivalent to estimating the probability amplitudes to within $\pm2^{-n}/\text{poly}(n)$
%error. 
Google recently announced a $53-$qubit experimental demonstration
of RCS~\cite{arute2019quantum}.
% The previous complexity theoretical
%evidence for the hardness of RCS assumed non-unitarity operations
%that approximate the actual quantum circuit~\cite{bouland2018quantum}.
%Soon after,  the quantum supremacy of RCS was challenged by new classical algorithms that claim to estimate the outputs of random quantum circuits ~\cite{napp2019efficient,gray2020hyper,huang2020classical}. 
Soon after, classical algorithms appeared that challenge the supremacy
of random circuits by estimating their outputs ~\cite{napp2019efficient,gray2020hyper,huang2020classical}. How hard is it to classically simulate the output of random quantum circuits? 

We prove that estimating the output 
probabilities 
of random quantum circuits is formidably hard ($\#P$-Hard) for any classical computer.  This makes RCS the strongest candidate for demonstrating quantum supremacy 
%that relies on the minimal number of assumptions
relative to all other proposals. The robustness to the estimation error that we prove 
%can be benchmarked against the experiments and 
may serve as a new hardness criterion for the performance of classical algorithms.
%This proof  simulation of the output of random quantum circuits with quantifiable robustness
%(noise tolerance) that relies on minimal number
%of assumptions
% as compared with all other quantum supremacy proposals. 
%
%We prove a hardness result for RCS with quantifiable robustness
%(noise tolerance) that relies on minimal number
%of assumptions as compared with all other quantum supremacy proposals. 
To achieve this, we introduce the Cayley path interpolation
between 
any two gates of a quantum computation
and convolve recent advances in quantum
complexity and information with probability and random matrices.
% to prove the average-case hardness of quantum circuits, which relies on minimal number
%of assumptions as compared with all other quantum supremacy proposals. % The bounds on noise resilience for a grid
%of $\sqrt{n}\times\sqrt{n}$ qubits leads to hardness of computing
%probabilities to within the additive error of $2^{-O(n^{4})}$ for
%the circuit of depth of $\sqrt{n}$, and $2^{-O(n^{3})}$ for constant
%depth circuits. 
Furthermore, we apply algebraic geometry to generalize the well-known Berlekamp-Welch 
%interpolation 
algorithm that is widely used in coding theory and cryptography.
%to rational functions by an application of algebraic geometry
%in coding theory. 
%Lastly, we prove the robustness which can be benchmarked against the experiments
%and may serve as a new hardness criterion for the performance of classical algorithms.
%These new techniques may be of utility in fresh
%new contexts such as in cryptography, and coding theory. 
Our results imply that there is an exponential hardness barrier for the classical simulation of most quantum circuits.

\end{abstract}
\maketitle

\section{\label{sec:Overview}Introduction}

Quantum computation is the only model of computation that might solve
certain computational tasks exponentially faster than any standard
supercomputer. The excitement is compounded by the fact that the Moore's
law is reaching its limits and quantum phenomena are relevant if
the chips are to be made smaller. First proposed by Richard Feynman
for the efficient simulation of (quantum) matter, quantum computing
since has rapidly advanced with novel quantum algorithms and tantalizing
prospects. Examples include exponential speed-ups
for integer factorization~\cite{shor1999polynomial}, solution of
linear systems~\cite{harrow2009quantum}, and algorithms based on
quantum walks~\cite{childs2009universal}. From a theoretical standpoint,
the exponential separation in the computational power of quantum computers
would refute the Extended Church-Turing Thesis (ECTT), which asserts that
a probabilistic Turing machine can efficiently simulate any realistic
model of computation. 

Given the state of affairs, we cannot unconditionally prove that quantum
mechanics is hard to classically simulate without making plausible
complexity theoretical assumptions
% that are independent of quantum
%mechanics 
(e.g.,~$P\ne NP$ or non-collapse of the polynomial hierarchy).
These assumptions pre-date quantum computing and are believed
to be valid foundational bases in computer science.

The algorithms such as factoring seem to require fault-tolerant quantum
computation, which despite the current vigorous efforts is still a
distant goal. However, ``Noisy Intermediate Scale Quantum (NISQ)''\cite{preskill2018quantum}
computers have arrived with about $54$ high fidelity quantum bits
(qubits)~\cite{arute2019quantum,huang2020mermin}. Currently there
is a large global interest with an unprecedented academic and industrial
push (e.g.,~from IBM and Google) for developing these computers to
scale up while keeping quantum coherence. Ultimately one hopes to have enough
good qubits to reify quantum error correcting codes and perform fault-tolerant
quantum computation. In the meantime, the central question in quantum
computing is: In the absence of fault tolerance what can the quantum
computers do? Recent progress shows a mild separation between classical and NISQ computers {\cite{bravyi2018quantum,bravyi2020quantum} yet other work point out obstacles in variational algorithms designed for NISQ computers~\cite{bravyi2019classical, bravyi2019obstacles}.  A milestone is to prove a large separation between the power of NISQ computers and classical ones with minimal number of
complexity theoretical assumptions. This event has been termed quantum
supremacy~\cite{preskill2018quantum}: An event when a NISQ computer
efficiently performs a computational task which would take a formidably long
time (e.g.,~exponential in $n$) to perform on any classical computer.
Although the task may not be of practical use, if demonstrated successfully,
it would show that quantum computers indeed have awesome computational
powers and, more fundamentally, that quantum mechanics is hard to
simulate classically-- whereby refuting ECTT. 

The modern quantum supremacy proposals are based on sampling problems,
which have the advantage that they need little extra assumptions and
can now be performed experimentally on NISQ computers~\cite{arute2019quantum}. Aaronson showed that refutation of ECTT for sampling also refutes ECTT for search problems~\cite{aaronson2014equivalence}. The sampling
task aims to show that outputting samples from a distribution that
mimics the distribution of the quantum process (e.g.,~circuit) is
classically very hard. Moreover, the hardness
of sampling relies on complexity theoretical assumptions that pre-date
quantum computing and do not appeal to quantum mechanics for their
justification (see~\cite{harrow2017quantum,lund2017quantum} for
reviews). The main assumption is the non-collapse of the polynomial
hierarchy, which is stronger than $P\ne NP$ but is as plausible.
This assumption is far less restrictive than those needed for justifying
that, say integer factoring, is hard. The latter needs to assume that
integer factoring is hard classically, which is not known. Indeed,
because of these reasons Google focused on random circuit sampling (RCS)
to experimentally demonstrate quantum supremacy~\cite{arute2019quantum}.

But why random circuits? Complexity theoretical statements are {\it worst-case}
statements. For example, if one says that a problem is $NP-$Hard 
it means that it is $NP-$Hard in general, yet most instances may be easy.
From an experimental perspective, such hardness statements imply that
there exists circuits that are hard to simulate on a classical
computer (unless $P=NP$), but how can one make experimental progress on this claim?
What circuit to pick and how to verify hardness? What is needed
is that the problem is not just hard for some instance
rather it possesses \textit{average-case hardness}. Average-case hardness
is implied if most circuits are hard, which in turn implies that a
random circuit is hard with high probability. Hence, if a random choice
of the quantum circuit is implemented in the lab, then certified average-case
hardness  provides provable guarantees that
the task is likely hard classically. And a (large) family of experiments can be
performed that are hard to simulate classically. 
%Moreover, average
%case hardness along with a property called anti-concentration rule
%out the possibility of efficient classical simulation even approximately
%(we discuss this more below). 
The caveat is that the average-case
hardness is much stronger than worst-case; therefore till now it was
only conjectured to be true in all quantum supremacy proposals. 

The average-case hardness has been a long standing open conjecture.
In this paper we prove that computing the output probabilities of most quantum circuits is $\#P$-Hard \footnote{$\#P$ is a generalization of NP that extends the decision problems
to counting problems. Informally an NP-complete problem asks:
Does a 3-SAT instance have a solution? The answer is no (i.e.,~zero
solution) or yes (at least one solution). Whereas, $\#P$ asks: How
many solutions does a 3-SAT instance have?}, which implies average-case hardness of computing the probability amplitudes on a classical computer unless the polynomial hierarchy collapses. %(in fact $\#P-$Hard).
% give the best known theoretical guarantee for the
%hardness of sampling from the average-case quantum circuits. 
This
indeed makes RCS the strongest candidate
for the demonstration of quantum supremacy in the near-term quantum computing era as it provides strong complexity theoretical evidence for the hardness of sampling. 
%with the best known theoretical
%foundation. We prove that the computation of the output probabilities
%of most quantum circuits is formidably hard for any classical algorithm
%under the assumption of the non-collapse of the polynomial hierarchy.
We then prove that even approximating the probabilities to small additive
errors is hard; this is referred to as robustness. The robustness
we prove is fully quantifiable and can be benchmarked against the
experiments and the performance of classical algorithms. These bounds
on noise resilience for a grid of $\sqrt{n}\times\sqrt{n}$ qubits
leads to hardness of computing probabilities to within the additive
error of $2^{-O(n^{3})}$ for circuit of depth $\sqrt{n}$, and $2^{-O(n^{2})}$
for constant depth circuits. It is noteworthy that the very recent
numerical algorithms for simulating general constant depth quantum
circuits~\cite{napp2019efficient,bravyi2019classical} take exponential
time if the approximation error falls within our bounds. 
In particular, Napp \textit{et al} numerically
simulate random instances of a universal family of constant depth circuits to $2^{-O(n)}$ additive error~\cite{napp2019efficient}.
A recent paper from AliBaba group makes a similar claim for larger depth circuits such as those used in the Google experiment~\cite{huang2020classical}. This challenges the quantum supremacy experiments as classical simulation of  a random  quantum circuit of depth $\sqrt{n}$ to within $2^{-O(n)}$ additive error is supposed to imply hardness of sampling. 
Therefore, our result may serve
as a new and provable quantum supremacy criterion, albeit a stringent one, for classical algorithms that simulate the output of a quantum circuit.

To accomplish this we provide an explicit and efficient construction
of one-parameter family of unitary matrices (i.e.,~quantum gates)
that interpolate between any two fixed unitaries by varying an interpolation
parameter, $\theta$, between zero and one. The path is based on the
Cayley transformation which maps between hermitian and unitary matrices. Using random matrix and probability theories we prove that
the distribution of each scrambled gate is arbitrarily close to the
Haar measure with a total variation distance that vanishes with increasing
$n$. One says that a quantum circuit is an instance of an average
case circuit if its (local) computational gates enact random Haar
unitaries. Therefore, our interpolation scheme may enable the study
of transitions into quantum chaos or area-to-volume law entanglement
entropies in the study of black holes~\cite{hayden2007black}, and
holographic models of quantum gravity~\cite{takayanagi2018holographic}.
Lastly, we prove an extension of Berlekamp-Welch (BW)~\cite{welch1986error}
algorithm that can efficiently interpolate rational functions (Alg.~\ref{alg:BW_Manuscript}). The discovery of BW algorithm was originally motivated
by classical error correction schemes such as the Reed-Solomon codes
\cite{reed1960polynomial}. In these codes, the messages are encoded
in the coefficients of polynomials over finite fields. BW is remarkable
in that it exactly recovers such polynomials even if the evaluation
of the polynomial at some number of points is erroneous. Our Alg.~\ref{alg:BW_Manuscript} extends the possibility of message encodings
to more general classes of functions.

\subsection{Previous work}
\begin{table}
\centering{}\includegraphics[scale=0.50]{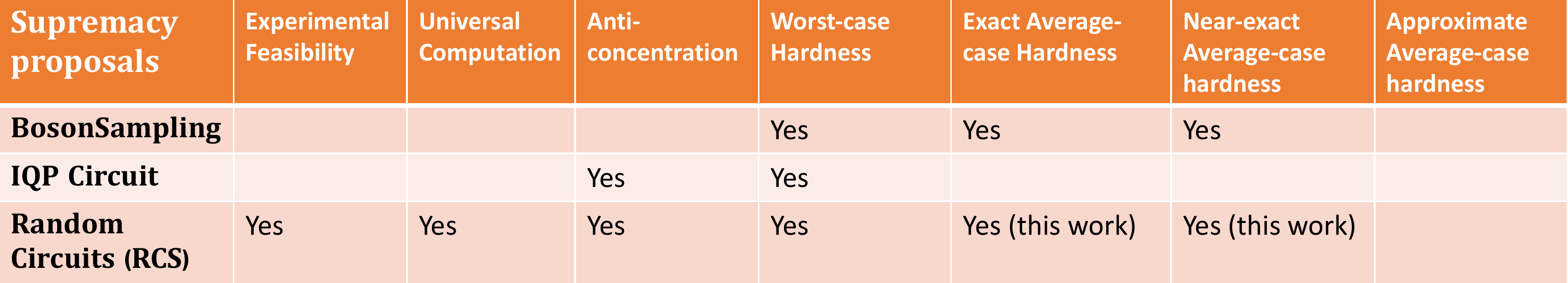}\caption{\label{Table:Summary}Main supremacy proposals, where by``near-exact" vs. ``approximate" we mean to within additive errors of $2^{-poly(n)}$ vs. $2^{-O(n)}/poly(n)$ respectively. See the main Theorem.}
\end{table}
The sampling-based quantum supremacy proposals originate from the
seminal work of Aaronson and Arkhipov~\cite{aaronson2011computational}.
There are three main candidates for demonstrating quantum supremacy
in the near term. These are BosonSampling~\cite{aaronson2011computational},
instantaneous model of quantum computation (known as IQP circuits)
\cite{bremner2011classical,bremner2016achieving}, and RCS~\cite{arute2019quantum,boixo2018characterizing,bouland2018quantum}.
RCS is the encompassing candidate because it relies on minimum
number of complexity theoretical assumptions, it
is amenable to experimental tests of larger size that outstretch the
limit of classical simulations, and is the only one of the three that implements universal quantum computation. For example, in BosonSampling experiments,
it is very hard to scale the number of the photons anywhere close
to what is needed to demonstrate supremacy~\cite{neville2017classical}.
From a foundational vista, BosonSampling needs to make two complexity
theoretical assumptions: 1. Multiplicative estimation of permanents
of Gaussian random matrices is $\#P$-Hard. 2. Anti-concentration conjecture holds, which assumes that the permanent
of random gaussian matrices is not too concentrated near zero. The
latter is crucial for arguing that the additive estimation of the
permanent is equivalent to the multiplicative approximation. The IQP
circuits have the added advantage that they can be implemented on
quantum circuit hardware and require one main extra assumption, which
is the average-case hardness. This assumes that the complexity of
random IQP coincides with the most contrived (i.e.,~the worst-case
hard instances) and is analogous to the first assumption in BosonSampling.
As discussed above, assumption of average-case hardness is quite strong
and requires a proof for its justification. Further, IQP circuits
are restrictive models of computation because their gates are chosen
from a small discrete set of possible gates. In fact, the circuits
are essentially classical (diagonal in $z-$basis) except at the first
and last layer of computation. This makes IQP circuits non-universal
for quantum computation yet very interesting from a quantum complexity
perspective.

RCS is based on random operations, which makes the underlying circuit universal and the most general 
instantiation of average-case computation. The average-case
instances are minimally constrained and the local gates can enact
any realization of $SU(2)$ or $SU(4)$. Moreover, rigorous
anti-concentration bounds have been proved for RCS~\cite{harrow2018approximate},
which show that the output distribution is close to the uniform. This
in turn implies that the signal to noise ratio of probabilities is
relatively higher. We summarize the known key results for the main supremacy proposals in Table \ref{Table:Summary}.

The only other theoretical result applicable to the hardness of RCS
is the nice work of Bouland \textit{et al}~\cite{bouland2018quantum},
which proves a hardness result of computing exact amplitudes of a
\textit{non-unitary} approximation of the actual circuit (see Subsection~\ref{subsec:Inadequacy-of-Taylor} and also Appendix A in~\cite{napp2019efficient}).
This requires a new complexity theoretical assumption, which is the
existence of a classical algorithm that takes as inputs non-unitary
``circuits'' and efficiently produces the probability amplitudes.
As shown in Subsection~\ref{subsec:Inadequacy-of-Taylor}, the non-unitary
oracles will not yield a realistic robustness on actual quantum circuits
and hardware. The latter point was furthered emphasized in the work
that followed ours~\cite{napp2019efficient}. Moreover, due to the
non-unitary approximation, the closeness to random unitary circuits
was not proved.

This work gives an entirely new construction and proof of average-case hardness that is free
of the previous limitations. It also proves quantifiable robustness bounds which can be 
compared with experiments and new classical algorithms that have
challenged the claim. Because of this work, RCS now has the fewest assumptions and is the strongest candidate for demonstration of quantum supremacy.
%Therefore, RCS is now free of the extra aforementioned
%assumptions that other supremacy proposal have. 
Our proof simply asserts that estimating the probability amplitudes, even on average (random quantum circuits), is  $\#P$-Hard for any classical supercomputer. For ECTT to be refuted (a watershed event), our stringent estimation bound needs to be weakened to $2^{-O(n)}$. However, the aforementioned recent numerical algorithms cast  doubt on the possibility of this historical upset.

\subsection{Cayley path and quantum supremacy}

An $n-$qubit quantum computation initializes each of the qubits to
the state $|0\rangle$, which makes the initial joint quantum state
of the circuit $|0^{n}\rangle\equiv(1,0,\dots,0)^{T}$. The initial
state is simply the $2^{n}$ dimensional standard basis vector. A
quantum computation is simply the application of a unitary matrix
to $|0^{n}\rangle$ (i.e.,~a rotation in the Hilbert space). The details
of the unitary matrix are fixed by the quantum algorithm being implemented.
This unitary is instantiated in the lab by a circuit $C$ with some
architecture $\mathcal{A}$ (Fig.~\ref{fig:Architecture_C}). The
final state of the quantum computation is the vector $|\psi\rangle=C\:|0^{n}\rangle$,
where for simplicity of notation $C$ also denotes the $2^{n}$ dimensional
unitary matrix that the circuit instantiates. The square of the absolute
value of any entry of $\psi$ is the probability of occurrence of
that particular outcome upon measurement.
Good quantum algorithms output the answers to the desired computational
tasks with high probability and are more efficient than classical computers. The architecture
$\mathcal{A}$ serves as a blue-print that specifies the connectivity
and interaction of the qubits in the course of the computation but
is otherwise agnostic to the details of the computation. Constrained
by the difficulties in the experiment, $\mathcal{A}$
almost always allows for $1-$ or $2-$qubit operations (depicted
in Fig.~\ref{fig:Architecture_C}). For details and proofs
of what follows see Supplementary Material (SM).
\begin{figure}
\centering{}\includegraphics[scale=0.45]{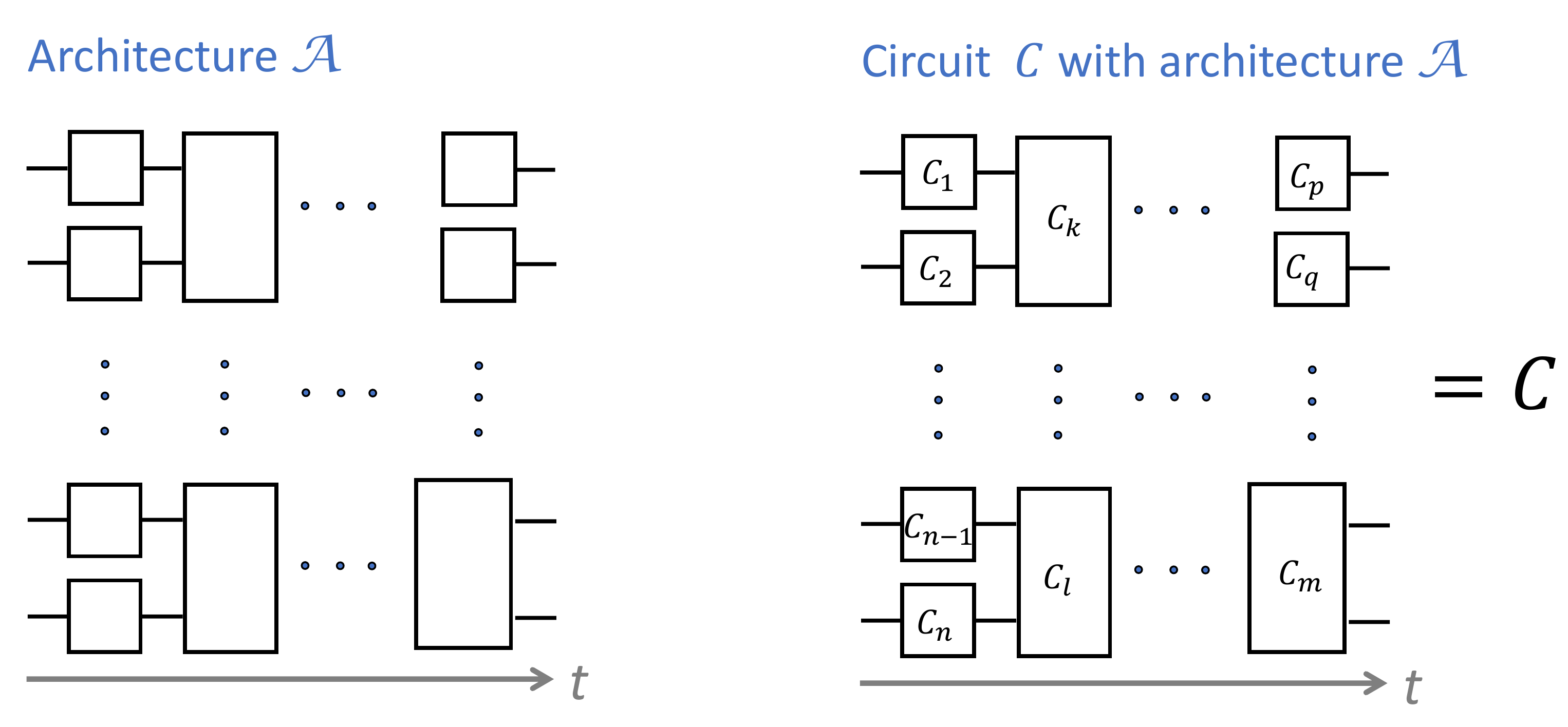}\caption{\label{fig:Architecture_C}Left: The architecture $\mathcal{A}$ is the blue print. Right: Circuit $C$ with architecture
$\mathcal{A}$}
\end{figure}

Suppose $C=\mathcal{C}_{m}\,\mathcal{C}_{m-1}\cdots\mathcal{C}_{1}$
is a quantum circuit with an architecture $\mathcal{A}$ acting on
$n$ qubits with $m$ local unitary gates.
%Then $C$ and all $\mathcal{C}_{k}$
%in the product enact $2^{n}\times2^{n}$ unitary matrices
Further, $\mathcal{C}_{k}\equiv C_{k}\otimes I$, which
means that $C_{k}$ enacts 
%a $2\times2$ or $4\times4$ unitary matrix that
%implements
a non-trivial $1-$ or $2-$qubit operation
and acts trivially on the rest of the qubits as dictated by the architecture
$\mathcal{A}$ (Fig.~\ref{fig:Architecture_C}). By a random circuit
we shall mean that each local unitary is drawn independently and uniformly
at random from the set of all possible unitaries of the appropriate
size (i.e.,~Haar measure). Once the $m$ gates are picked
at random one can sample from the output distribution of the circuit
by running the circuit and measuring the outputs to obtain a string
$x\in\{0,1\}^{n}$ with the probability $p_{x}=|\langle x|C|0^{n}\rangle|^{2}$.
This task is efficiently for a quantum computer as one simply runs
the device and measures the output. Quantum supremacy claims that
to draw strings from a distribution that mimics the probability distribution
induced by the random circuit is computationally hard for any classical
algorithm that takes as input the classical description of the gates.

If sampling from the output of the quantum computer were efficient
then via a well-known algorithm due to Stockmeyer for approximate counting, one could efficiently
estimate the probability amplitudes $p_{x}=|\langle x|C|0^{n}\rangle|^2$
to within a small relative error~\cite{stockmeyer1985approximation}. That is
calculate a quantity $\tilde{p}_{x}$ such that $e^{-\epsilon}p_{x}\le\tilde{p}_{x}\le e^{\epsilon}p_{x}$
for small $\epsilon=O(1/\text{poly}(n))$.  
However, a more natural measure in sampling problems is
estimations with respect to additive errors; that is $p_{x}\pm\epsilon$~\cite{bravyi2016improved,harrow2017quantum}. 
An
attractive feature of RCS is that the output ``anti-concentrates" for circuits of depth $O(\sqrt{n})$~\cite{harrow2018approximate},
which makes the two types of errors equivalent in the sense that $O(1/\text{poly}(n))$ relative error estimation implies an additive error estimation of $p_{x}\pm 2^{-n}/\text{poly}(n)$.
Therefore, to refute ECTT it is
sufficient to prove that estimating $p_{x}$ for random circuits to $2^{-n}/\text{poly}(n)$ additive error 
is hard for any classical algorithm. Moreover, because of a property
known as ``hiding''~\cite{aaronson2011computational} it is sufficient
to prove the average-case $\#P-$Hardness of
\begin{equation}
p_{0}\equiv|\langle0^{n}|C|0^{n}\rangle|^{2}\quad,\label{eq:p0-1}
\end{equation}
to inverse polynomial relative error.
\begin{figure}
\centering{}\includegraphics[scale=0.5]{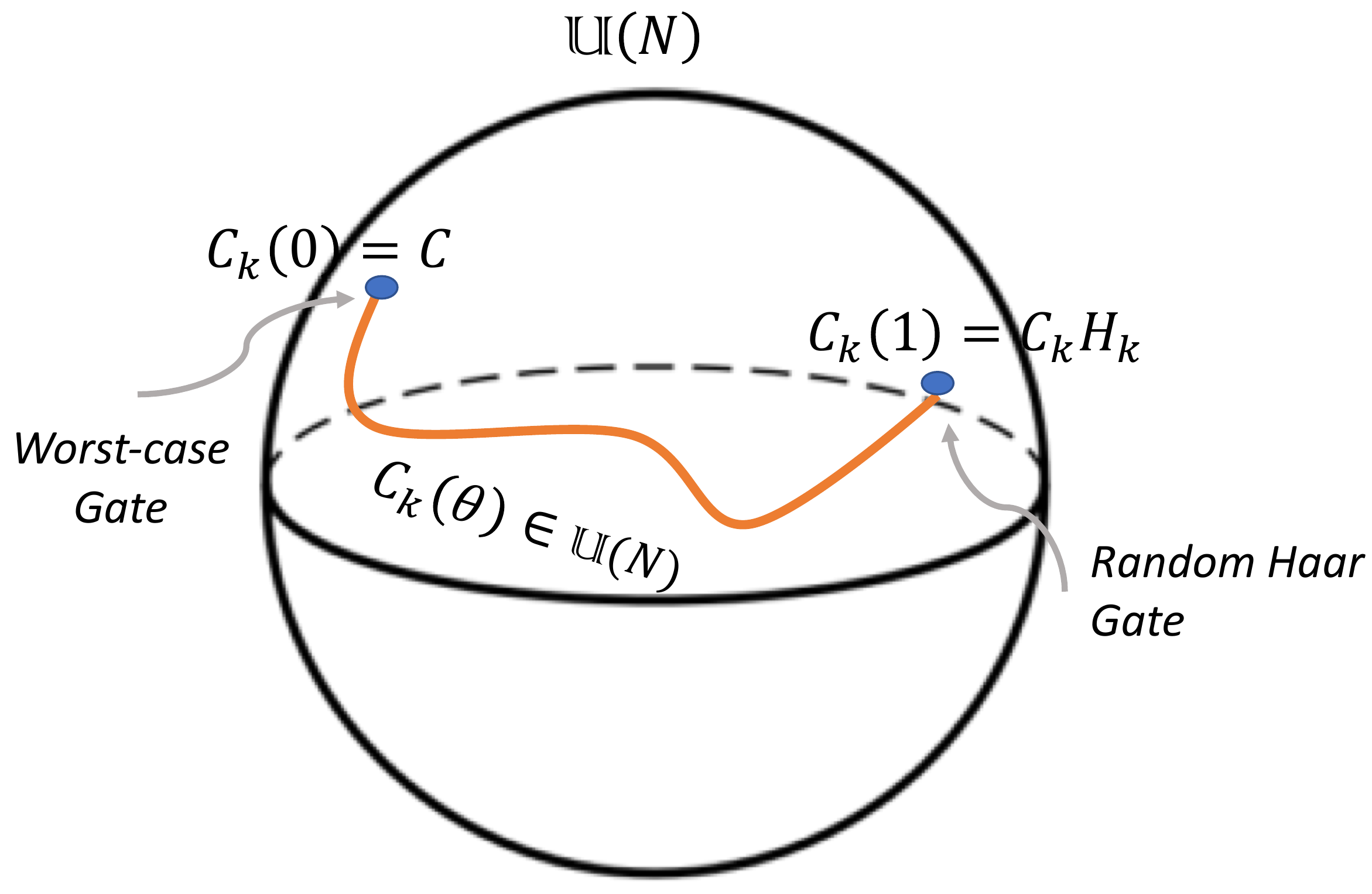}\caption{\label{fig:UnitaryG-Intro} Schematics of the Cayley path on the unitary
group induced by $C_{k}(\theta)\equiv C_{k}f(\theta h_{k})$.}
\end{figure}

It is known that there exists \textit{worst} case quantum circuits
with a specific architecture ${\cal A}$ for which sampling is $\#P$-Hard.
In seminal works Terhal and DiVincenzo~\cite{terhal2002adaptive}
proved the existence of such circuits for constant depth (four) circuits
and Bremner\textit{ et al} proved it for IQP circuits with depth of $\sqrt{n}$
\cite{bremner2016achieving}. We also know that these worst-case
instances are hard to within a constant relative error\cite{dyer2004relative}.
The distribution over the circuits with the same architecture and
Haar local unitaries is denoted by ${\cal H}_{{\cal A}}$. Can we
mathematically make the $\#P$-Hardness of the worst-case
to average-case circuits with the same architecture equivalent? Suppose $C_{1},\dots,C_{m}$
are the gates of the worst case circuit with the architecture $\mathcal{A}$,
and $H_{1},\dots,H_{m}$ are corresponding Haar random gates, where instead of each $C_{k}$ in Fig.~\ref{fig:Architecture_C},
one puts $H_{k}$. The latter is an average-case instance of $\mathcal{H}_{\mathcal{A}}$.
We propose a continuous path, that we call the Cayley path, which
connects any two circuits of the same architecture. The Cayley function
is defined by $f(x)=(1+ix)/(1-ix)$ where $i=\sqrt{-1}$, where one
defines $f(-\infty)=-1$. This function maps any real $x$ to a unique
point on a unit circle in the complex plane (SM). Given a local gate $H_{k}$
we can always write it as $H_{k}=f(h_{k})$ where $h_{k}$ is a Hermitian
matrix and the Cayley function maps the real eigenvalues of $h_{k}$
to the complex eigenvalues of the unitary matrix $H_{k}$; the eigenvectors
remain the same. We define a path parametrized by a real valued $\theta$
that interpolates between $C_{k}$ and $H_{k}$ and is a unitary for
any $\theta$ (Fig.~\ref{fig:UnitaryG-Intro})
\begin{equation}
C_{k}(\theta)=C_{k}\:f(\theta\:h_{k})\quad,\label{eq:CayleyPath}
\end{equation}
where we recall that each $C_{k},H_{k}$ is an $N\times N$ unitary
matrix with $N\in\{2,4\}$. Note that $C_{k}(0)=C_{k}$ is the gate
of the worst-case instance and by the translation invariance of Haar
measure $C(1)=C_{k}H_{k}$ is a random Haar unitary (i.e.,~
an average-case instance).
%Using the spectral decomposition we write $h_{k}=\sum_{\alpha=1}^{N}h_{k,\alpha}|\psi_{k,\alpha}\rangle\langle\psi_{k,\alpha}|$
%where $h_{k,\alpha}$ and $|\psi_{k,\alpha}\rangle$ are the eigenvalues
%and eigenvectors of $h_{k}$. We can now
%express Eq.~\eqref{eq:CayleyPath} as  $C_{k}(\theta)=q_{k}^{-1}(\theta)\sum_{\alpha=1}^{N}p_{k,\alpha}(\theta)\;C_{k}|\psi_{k,\alpha}\rangle\langle\psi_{k,\alpha}|$
%where $q_{k}(\theta)=\prod_{\alpha=1}^{N}(1-i\theta h_{k,\alpha})$
%and $p_{k,\alpha}(\theta)=(1+i\theta h_{k,\alpha})\prod_{\beta\in[N]\backslash\alpha}(1-i\theta h_{k,\beta})$ (SM).
The entries of $C_{k}(\theta)$ are rational functions
of degree $(N,N)$, i.e.,~the numerator and denominator are polynomials
of degree $N$ because of the algebraic form of the Cayley function (SM). Under the Cayley path the whole circuit which enacts
a $2^{n}\times2^{n}$ matrix transforms gate-by-gate as
\[
C(\theta)=\mathcal{C}_{m}(\theta)\,\mathcal{C}_{m-1}(\theta)\cdots\mathcal{C}_{1}(\theta)\quad;
\]
consequently each entries of $C(\theta)$ is a rational function of
degree $(mN,mN)$. In Lemma \eqref{lem:TVD_Haar} of SM we show that
under Cayley path parametrization (Eq.~\eqref{eq:CayleyPath})
the distribution over the circuits $C(\theta)$ for $|\theta-1|\le\Delta\ll1$
is 
%$O(m\Delta)$ 
arbitrarily close to $\mathcal{H}_{\mathcal{A}}$ in total variation
distance. This implies that the circuits are indeed generic instances
of the average-case
% as they can be made arbitrary close to random Haar
%local circuits
 for $\theta\rightarrow1$.

Since $\langle0^{n}|C(\theta)|0^{n}\rangle$ is the $(1,1)$ entry
of the matrix $C(\theta)$, the quantity of interest in Eq.~\eqref{eq:p0-1}
is a rational function of degree $(2mN,2mN)$ in $\theta$
\begin{eqnarray}
p_{0}(\theta) & \equiv & |\langle0^{n}|C(\theta)|0^{n}\rangle|^{2}\quad,\label{eq:p0_theta-1}
\end{eqnarray}
and our task is to prove under plausible complexity theoretical assumptions
that approximating this quantity to small additive errors is $\#P-$Hard
for \textit{any} classical algorithm. 
\begin{figure}
\includegraphics[scale=0.5]{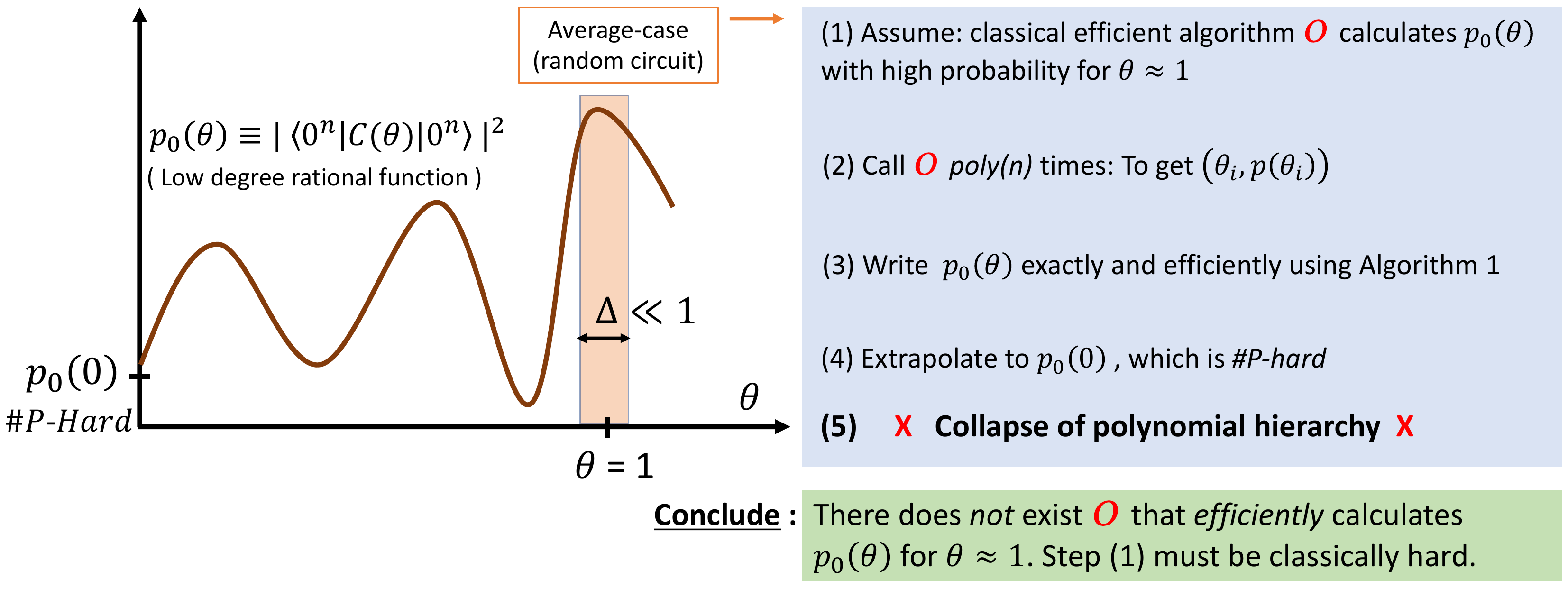}\caption{\label{fig:Reduction-idea}The worst- to average-case hardness reduction for a
circuit with architecture $\mathcal{A}$}
\end{figure}

We prove this by showing that if this task were efficient on a classical
computer then it would violate the widely believed
assumption of the non-collapse of the polynomial hierarchy (Fig.~\ref{fig:Reduction-idea}). Our main theorem is (see Theorems \eqref{thm:MainResult-1}
and \eqref{thm:Robustness} in SM):\\

\textbf{Theorem:} 
Suppose there exists an architecture $\mathcal{A}$ for which it is $\#P-$Hard to compute arbitrary output probabilities 
%Suppose there exists a  worst-case circuit
%with an architecture $\mathcal{A}$ such that it is $\#P-$hard to
%compute the probability amplitudes 
to within a small multiplicative
error, then it is $\#P-$Hard to calculate the probability amplitudes
for most random circuits with the same architecture $\mathcal{A}$
to within
 $\epsilon=2^{-\Omega(m^2)}$ additive error, where $m$ is the number of gates.\\
% $\epsilon=2^{-O(m\Delta^{-1})}$ additive error.\\

The theorem states that it is hard to calculate the amplitudes for
\textit{most} circuits. Indeed a random circuit may accidentally coincide
with the 'worst-case' circuit and such a scenario has to be excluded. More practically, the requirement
that the algorithm is hard for most instances leads to beautiful connections
with coding theory and algebraic geometry. In order to prove this
statement, one first assumes that there exists a classical algorithm
that efficiently computes $p_{0}(\theta)$ for $\theta\approx1$
(i.e.,~average-case circuits), then by calling the algorithm on $\text{poly}(n)$
number of points $\theta_{i}$, one can explicitly solve for the
rational function that defines $p_{0}(\theta)$ for all $\theta$.
This would be easy to argue if the classical algorithm succeeded in outputting $p(\theta_{i})$ on
any input $\theta_{i}\approx1$. But
we can only assume that it succeeds on some inputs. A
remarkable algorithm due to Berlekamp and Welch succeeds in explicitly
specifying  a polynomial if the rate of error (number of inputs on
which the algorithm fails) is low enough~\cite{welch1986error}.
The intuition behind the algorithm is that two polynomials with sufficient
number of intersections must be the same polynomial. This is also
true for more general algebraic curves as implied by Bezout's theorem
in algebraic geometry; we generalize BW algorithm to rational functions
and prove the following Algorithm (Alg. \eqref{alg:(Berlekamp-Welch-for-Rational}
in SM):
\begin{alg}\label{alg:BW_Manuscript}
Given $(\theta_{1},f_{1})$, $(\theta_{2},f_{2})$,
..., $(\theta_{n},f_{n})$, output the rational function $F(\theta)$
of degree $(k_{1},k_{2})$ by evaluating it at $n>k_{1}+k_{2}+2t$
points despite $t$ errors in the evaluation points.
\end{alg}
The standard  BW algorithm is used for encoding messages
in the coefficients of polynomials and allows for reliable transmission
despite errors. Our algorithm generalizes this possibility to rational
functions.

The second part of the theorem proves that not only $p_{0}$ is hard
for classical computers but that the hardness is robust and even computing
the approximation $p_{0}\pm\epsilon$ is $\#P$-Hard. Suppose the circuit has $n$ qubits and has a two dimensional
architecture on a $\sqrt{n}\times\sqrt{n}$ grid, which is the common current
architecture with superconducting qubits (e.g.,~at IBM
and Google). If the depth is a constant
then the number of local gates is $m=O(n)$ and if the depth is $d=O(\sqrt{n})$
then $m=O(n^{3/2})$. For these we prove robustness with respect to additive errors of $\epsilon=2^{-O(n^{2})}$
for constant depth circuits, and $\epsilon=2^{-O(n^{3})}$
for Google's experiment whose depth is $\sqrt{n}$ (SM). 
%An upshot
%of this work is that the robustness is proved and quantified, which can help the numerical analysts and experimentalists
%probe the hardness of the classical algorithms and circuits respectively.

As stated in the introduction, for constant depth circuits, classical
numerical algorithms seem to exhibit a hardness phase transition with
respect to the additive error. For example, it is seen
that the classical algorithm finds it hard to provide an additive
error approximations that we prove, yet seems to be able to efficiently
simulate universal circuits for $\epsilon=2^{-n}/\text{poly}(n)$~\cite{napp2019efficient}.
For circuits with depth $\sqrt{n}$ we proved hardness
of estimating probabilities with respect to $2^{-O(n^{3})}$ additive
error, whereas AliBaba challenges the original supremacy proposal by claiming to simulate these circuits up to the error of $2^{-O(n)}$~\cite{huang2020classical}. Since in the near-term the number of qubits
is approximately $n\sim100$, the exact quantification might prove
helpful in practice.
If it does happen that deeper circuits
are hard with respect to $2^{-n}/\text{poly}(n)$ additive error,
then any theoretical guarantee would most likely need to use proof
techniques that are not based on extrapolations of low degree algebraic
functions. More research in this direction
is needed to better chart the topography of the computation power
of near-term quantum computers.

In this work we provided the strongest rigorous hardness results for
quantum supremacy proposals to date applicable to the recent experimental
breakthroughs~\cite{arute2019quantum}. In doing so we offer new tools
for interpolating between circuits that may help elucidate the computational
power of quantum circuits as a function of their architecture. Further,
we provide an algorithm for rational function extrapolation that
may be useful in fresh new contexts such as coding theory. Future
work may include investigation of the optimality of our robustness bounds especially
that the numerical algorithms seem to reach $2^{-O(n)}$ additive error. It would be interesting to (numerically) investigate the phase transition with respect to the additive error 
for deeper circuits (e.g.,~depth $\sqrt{n}$). Lastly, we envision the utility of Cayley path scrambling of the quantum
circuits in the study of models of quantum gravity and
other contexts such as in cryptography, circuit hiding, blind quantum
computation, quantum computation by (extra)interpolation, and quantification of the power of quantum circuits as a function of their architecture.

\begin{acknowledgement}
We thank Sergey Bravyi, John Napp, Alex Dalzell, Sergio Boixo, and Jeffrey Schenker for discussions. We also thank Adam Bouland, Bill Fefferman, Yunchao Liu, and Karol Zyczkowski.
I am grateful for the support of the Frontiers Foundation and the MIT-IBM collaborative
grant.
\end{acknowledgement}

%{\bf Corresponding author:} Ramis Movassagh  

\bibliographystyle{alpha}

\bibliography{mybib}

\newpage{}

\section{Algebraic unitary path}

\subsection{Cayley path}

Let $\mathbb{U}(N)$ be the set of $N\times N$ unitary matrices and
suppose $U_{0}\in\mathbb{U}(N)$ and $U_{1}\in\mathbb{U}(N)$. How
can one build a parametrized path $U(\theta)$ between them such that
$U(\theta)\in\mathbb{U}(N)$ for all $\theta\in[0,1]$ and $U(0)=U_{0}$
and $U(1)=U_{1}$? 

Previously we gave various paths between $U_{0}$ and $U_{1}$ that
were everywhere contained in the unitary group~\cite{movassagh2018efficient}.
In particular, we gave a new rational function-valued path based on
the QR-factorization~\cite{movassagh2018efficient}. 

Here we consider a new extrapolation based on the Cayley transformation.
Suppose $U_{0},U_{1}\in\mathbb{U}(N)$ are unitary matrices and define
the unitary matrix $H\equiv U_{0}^{\dagger}U_{1}$ . Let $x\in\mathbb{R}$
and $f(x)$ be the \textit{Cayley function}
\begin{figure}
\centering{}\textit{\includegraphics[scale=0.3]{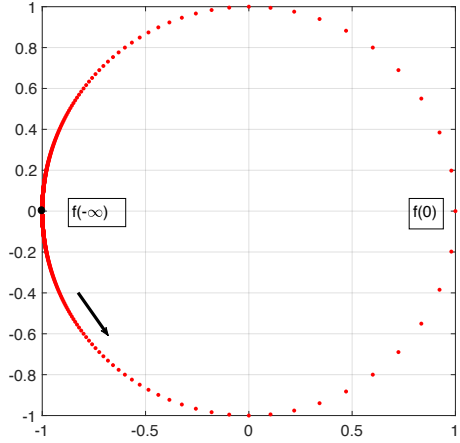}\caption{\textit{\label{fig:Plot-of-Cayley}Plot of the Cayley function in
the complex plane (Eq.~\eqref{eq:f_x}). The arrow shows how the function
fills the unit circle as $x$ increases from $x=-\infty$. The non-uniform
spacing is due to the finite step size in $x$ and aggregation of
points at infinity.}}
} 
\end{figure}
\begin{equation}
f(x)=\frac{1+ix}{1-ix}\label{eq:f_x}
\end{equation}
where we define $f(-\infty)=-1$ (Fig.~\ref{fig:Plot-of-Cayley}).
Since $f(x)$ is a bijection between the real line and the unit circle,
$H$ has the unique representation
\begin{equation}
H=f(h),\quad h=h^{\dagger}\label{eq:CayleyT}
\end{equation}
and it is easy to verify that $H^{\dagger}=f(-h)$.

We want an interpolation $U(\theta)$ such that $U(0)=U_{0}$ and
$U(1)=U_{1}$ and entries of $U(\theta)$ are simple functions of
$\theta$ that can be efficiently computed. Since $h$ is Hermitian,
we have
\[
h=\sum_{\alpha=1}^{N}h_{\alpha}|\psi_{\alpha}\rangle\langle\psi_{\alpha}|,
\]
where $(h_{\alpha},\psi_{\alpha})$ are the eigenpairs of $h$. Since
$h$ is a normal matrix (in fact Hermitian) we can easily express
the spectral decomposition of $H$ as $H\equiv f(h)=\sum_{\alpha=1}^{N}f(h_{\alpha})|\psi_{\alpha}\rangle\langle\psi_{\alpha}|$
.

The proposed path is
\begin{equation}
U(\theta)=U_{0}f(\theta h)=\sum_{\alpha=1}^{N}f(\theta h_{\alpha})\;U_{0}|\psi_{\alpha}\rangle\langle\psi_{\alpha}|.\label{eq:C_THETA}
\end{equation}
$U(\theta)$ is a unitary matrix as it is a product of two unitary
matrices. Note that $U(0)=U_{0}f(0)=U_{0}$ and $U(1)=U_{0}U_{0}^{\dagger}U_{1}=U_{1}$
as desired. We now derive the algebraic dependence of the entries
of $U(\theta)$ on $\theta$.

Using the definition of the Cayley function and foregoing equation
we write
\begin{equation}
U(\theta)=\frac{1}{q(\theta)}\sum_{\alpha=1}^{N}p_{\alpha}(\theta)\text{ }\left(U_{0}\text{ }|\psi_{\alpha}\rangle\langle\psi_{\alpha}|\right),\label{eq:Cayley_path}
\end{equation}
where $U_{0}\text{ }|\psi_{\alpha}\rangle\langle\psi_{\alpha}|$ are
matrices, and $q(\theta)$ and $p_{\alpha}(\theta)$ are the polynomials
of degree $N$ in $\theta$:
\begin{equation}
q(\theta)=\prod_{\alpha=1}^{N}(1-i\theta h_{\alpha})\quad\text{and}\quad p_{\alpha}(\theta)=(1+i\theta h_{\alpha})\prod_{\beta\in[N]\backslash\alpha}(1-i\theta h_{\beta}).\label{eq:q_x}
\end{equation}
So far we took the end-points to be entirely general. 

For the purposes of the proofs pertaining to quantum supremacy we
adapt the above and define the Cayley path on each gate. Let $C_{k}$
be a fixed gate of the worst-case circuit that enacts $N\times N$
unitary and let $H_{k}$ be a random Haar gate of the same size. By
the translation-invariance of the Haar measure $C_{k}H_{k}$ is also
a random Haar gate. We define the Cayley path for each of the gate
of the quantum computation by (see Eq.~\eqref{eq:CayleyPath} in manuscript)
\begin{equation}\label{eq:CayleyPathSM}
C_{k}(\theta)=C_{k}\:f(\theta h_{k})
\end{equation}
where $H_{k}=f(h_{k})$ and $h_{k}^{\dagger}=h_{k}$. Note that $C_{k}(0)=C_{k}$
and $C_{k}(1)=C_{k}H_{k}$ is a random gate.

In the above equation
we make the dependence on $k$ explicit and denote $p_{\alpha}(\theta)\mapsto p_{k,\alpha}(\theta)$
and $q(\theta)\mapsto q_{k}(\theta)$. 
%(see the manuscript).
Using the spectral decomposition we write $h_{k}=\sum_{\alpha=1}^{N}h_{k,\alpha}|\psi_{k,\alpha}\rangle\langle\psi_{k,\alpha}|$
where $h_{k,\alpha}$ and $|\psi_{k,\alpha}\rangle$ are the eigenvalues
and eigenvectors of $h_{k}$. We can now
express Eq.~\eqref{eq:CayleyPathSM} (Eq.~\eqref{eq:CayleyPath} of the manuscript) as  $C_{k}(\theta)=q_{k}^{-1}(\theta)\sum_{\alpha=1}^{N}p_{k,\alpha}(\theta)\;C_{k}|\psi_{k,\alpha}\rangle\langle\psi_{k,\alpha}|$
where $q_{k}(\theta)=\prod_{\alpha=1}^{N}(1-i\theta h_{k,\alpha})$
and $p_{k,\alpha}(\theta)=(1+i\theta h_{k,\alpha})\prod_{\beta\in[N]\backslash\alpha}(1-i\theta h_{k,\beta})$

 These
are simply polynomials of degree $N$ that only depend on $\theta$
and $H$. 
\begin{rem}
In Section~\ref{sec:RCS}, we think of $N$ as the size of a local
gate, which is $N=2$ or $N=4$. The entries of $C_{k}(\theta)$ are
rational functions of degree $(N,N)$. However, for a given $\theta$
and $H$, the normalization $q_{k}(\theta)$ is easy to classically
compute. It amounts to a diagonalization of an $N\times N$ matrix
$H$ and an $N-$fold product of the complex numbers $(1-i\theta h_{\beta})$.
Since $N\le4$, this is done in $O(1)$ time. For a general circuit
made up of $m$ gates, the classical computational complexity of calculating
all $q_{k}(\theta)$'s is therefore $O(m)$. By precomputing them
all and multiplying through, Eq.~\eqref{eq:p0-1} is effectively polynomial-valued
and can be treated formally as such. This will be made precise below
and $\prod_{k}q_{k}(\theta)$ will be bounded to guarantee non-divergence
of the rational functions.
\end{rem}
We now turn to the issue of uniquely determining a rational function
by efficient sampling.

\subsection{Berlekamp-Welch for rational functions}

\begin{lem}\label{fact:Rational-function}
Any rational function of degree $(k_{1},k_{2})$
in one variable $\theta$ has the general form
\[
F(\theta)=\frac{a_{k_{1}}\theta^{k_{1}}+a_{k_{1}-1}\theta^{k_{1}-1}+\cdots+a_{0}}{b_{k_{2}}\theta^{k_{2}}+b_{k_{2}-1}\theta^{k_{2}-1}+\cdots+b_{0}}
\]
and is uniquely determined by $k_{1}+k_{2}+1$ points provided that
$F(\theta_{i})=f_{i}<\infty$ for $i\in[k_{1}+k_{2}+1]$ are independent
conditions. 
\end{lem}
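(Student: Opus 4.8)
The plan is to count degrees of freedom and then exhibit the determining map as an invertible linear system. A rational function $F(\theta) = P(\theta)/Q(\theta)$ with $\deg P = k_1$ and $\deg Q = k_2$ has $k_1 + k_2 + 2$ coefficients $(a_{k_1},\dots,a_0,b_{k_2},\dots,b_0)$, but the representation is invariant under simultaneous rescaling $(P,Q)\mapsto(\lambda P,\lambda Q)$, so there are really $k_1 + k_2 + 1$ honest parameters; this already signals that $k_1+k_2+1$ evaluations is the right count. First I would fix the scaling ambiguity by a normalization — for instance by asking that not all $b_j$ vanish and, after observing that $Q$ cannot vanish at any of the sample points (since $f_i < \infty$), by scaling so that $Q(\theta_1) = 1$, or more symmetrically by imposing a generic linear constraint on the coefficient vector.

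Next I would convert each interpolation condition $F(\theta_i) = f_i$ into the \emph{linear} equation
\[
P(\theta_i) - f_i\, Q(\theta_i) = 0, \qquad i \in [k_1+k_2+1],
\]
which is legitimate precisely because $Q(\theta_i) \neq 0$. This is a homogeneous linear system in the $k_1+k_2+2$ unknown coefficients, i.e. a matrix $M$ of size $(k_1+k_2+1)\times(k_1+k_2+2)$ whose rows are $(\theta_i^{k_1},\dots,1,\, -f_i\theta_i^{k_2},\dots,-f_i)$. Generically $M$ has rank $k_1+k_2+1$, so its kernel is one-dimensional, and any nonzero kernel vector yields the same rational function after clearing the common scalar — giving both existence and uniqueness. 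I would then phrase the hypothesis ``$F(\theta_i)=f_i$ are independent conditions'' as exactly the statement that $M$ has full row rank $k_1+k_2+1$ (equivalently, that the only rational function of degree $\le(k_1,k_2)$ agreeing with the data at these points is the intended one, with no degenerate cancellation dropping the effective degree); this is the clean way to rule out the pathological coincidences the author flags.

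The main obstacle — really the only subtle point — is bookkeeping the scaling/degeneracy carefully: one must ensure the kernel of $M$ is genuinely one-dimensional and that the kernel vector's denominator polynomial does not vanish identically nor share a root with its numerator in a way that collapses the degree below $(k_1,k_2)$. I would handle this by noting that if two kernel vectors gave genuinely distinct rational functions $F_1 \neq F_2$, then $F_1 - F_2$ is a nonzero rational function with numerator of degree $\le k_1 + k_2$ vanishing at $k_1+k_2+1$ points, forcing that numerator to be the zero polynomial — a contradiction. This cross-multiplication / Bézout-flavored counting argument is exactly the degree-count intuition mentioned for Berlekamp--Welch in the introduction, and it closes uniqueness; existence is immediate from the dimension count on $M$. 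Finally I would remark that this lemma is the $t=0$ (error-free) special case of Algorithm~\ref{alg:BW_Manuscript}, and that the linear-algebraic formulation is precisely what makes the extension to $t$ erroneous evaluation points go through.
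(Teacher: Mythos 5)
Your proposal is correct and follows essentially the same route as the paper: clear denominators to turn each evaluation into the linear condition $P(\theta_i)-f_iQ(\theta_i)=0$, remove the overall scaling so that $k_1+k_2+1$ coefficients remain, and invoke the independence of the conditions to get a unique solution. Your homogeneous-kernel phrasing and the explicit cross-multiplication argument (a nonzero numerator of degree at most $k_1+k_2$ cannot vanish at $k_1+k_2+1$ points) just spell out what the paper compresses into ``unique point of intersection of hyperplanes,'' so no substantive difference.
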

\begin{proof}
Since a rational function is determined up to a constant multiple
of numerator and denominator, we can factor out $a_{k_{1}}/b_{k_{2}}$,
whereby the number of unknown coefficients are $k_{1}+k_{2}+1$. By
multiplying both sides by the denominator and then evaluating $F(\theta)$
at $k_{1}+k_{2}+1$ points $F(\theta_{i})=f_{i}$, the coefficients
become the solution of the linear system of equations in $(k_{1}+k_{2}+1)$
variables. Given that the $f_{i}$ are independent, the coefficients
are uniquely determined (unique point of intersection of hyperplanes).
Lastly, $f_{i}<\infty$ is to emphasize that we discard any $\theta_{i}$
that is a root of the denominator; such $\theta$'s are of measure
zero anyway. 
\end{proof}
In coding theory, and especially in Reed-Solomon codes~\cite{reed1960polynomial},
the messages $a_{0},\dots,a_{k}$ may be encoded into a polynomial
$a_{0}+a_{1}\theta+\cdots+a_{k}\theta^{k}$, which then is evaluated
at $n>k+t+1$ points. Then, the decoding procedure recovers the polynomial
and hence the message exactly despite $t$ errors. The decoding procedure
relies on BW algorithm for polynomial interpolation
\cite{welch1986error,gemmell1992highly}. BW can be extended to interpolate
rational functions. The proof follows Sudan's and is a generalization
of it from polynomial to rational functions~\cite{gemmell1992highly}. 
\begin{dfn}
(Error polynomial) \label{Def: ErrorPolynomial}Suppose $f=(f_{1},\dots,f_{n})$
is a vector. Let $F(\theta)$ be a rational function of degree $(k_{1},k_{2})$.
We define the error polynomial $E(\theta)$ as one that satisfies
\[
E(\theta_{i})=0\quad\text{if}\quad F(\theta_{i})\ne f_{i},\quad\text{deg}(E(\theta))\le t.
\]
\end{dfn}

\begin{alg}
%\begin{lyxalgorithm}
\label{alg:(Berlekamp-Welch-for-Rational}(Berlekamp-Welch for Rational
Functions) Given $(\theta_{1},f_{1})$, $(\theta_{2},f_{2})$, ...,
$(\theta_{n},f_{n})$, find a rational function $F(\theta)$ of degree
$(k_{1},k_{2})$ exactly by evaluating it at $n>k_{1}+k_{2}+2t$ points
despite $t$ errors in the evaluation points:
\[
|\;\{i\in[n]\;|\;F(\theta_{i})\ne f_{i}\}\;|\le t.
\]
%\end{lyxalgorithm}
\end{alg}
\begin{proof}
The error polynomial by Definition~\ref{Def: ErrorPolynomial} satisfies
\begin{equation}
E(\theta_{i})F(\theta_{i})=E(\theta_{i})f_{i}.\label{eq:ErrorPoly}
\end{equation}
Let $W(\theta_{i})\equiv E(\theta_{i})f_{i}$, which implies that
$f_{i}=W(\theta_{i})/E(\theta_{i})$. Since $W(\theta)=E(\theta)F(\theta)$
is a $(k_{1}+t,k_{2})$ rational function, by Eq.~\eqref{eq:ErrorPoly},
$f_{i}$ is a $(k_{1}+t,k_{2}+t)$ rational function of $\theta$.
By Lemma~\ref{fact:Rational-function}, the linear system defined
by Eq.~\eqref{eq:ErrorPoly}, has a solution as long as $n>k_{1}+k_{2}+2t$.
If $W(\theta)/E(\theta)$ results in a rational function of degree
$(k_{1},k_{2})$ we are done and we simply output it as $F(\theta)$,
otherwise we decide that there were too many errors.

Can the algorithm find distinct $(E_{1}(\theta),W_{1}(\theta))$ and
$(E_{2}(\theta),W_{2}(\theta))$? We now show that $W_{1}(\theta)/E_{1}(\theta)$
and $W_{2}(\theta)/E_{2}(\theta)$ are equal, which means that $F(\theta)$
is learned uniquely even if there are multiple solutions. We have
\[
\frac{W_{1}(\theta)}{E_{1}(\theta)}=\frac{W_{2}(\theta)}{E_{2}(\theta)}\iff E_{1}(\theta)W_{2}(\theta)=E_{2}(\theta)W_{1}(\theta).
\]
Recall that both sides are bounded degree rational functions (of degree
$(k_{1}+2t,k_{2})$). So by evaluating them at enough points we can
determine them uniquely (Lemma~\ref{fact:Rational-function}). Since
at every $\theta_{i}$
\[
E_{1}(\theta_{i})f_{i}=W_{1}(\theta_{i}),\quad E_{2}(\theta_{i})f_{i}=W_{2}(\theta_{i}),
\]
solving for $f_{i}$, we have $E_{1}(\theta_{i})W_{2}(\theta_{i})f_{i}=f_{i}E_{2}(\theta_{i})W_{1}(\theta_{i})$
at every $\theta_{i}$. If $f_{i}=0$, then by Eq.~\eqref{eq:ErrorPoly}
$W(\theta_{i})=0$, otherwise we just cancel the $f_{i}$. This proves
the claim that $F(\theta)=E_{1}(\theta)/W_{1}(\theta)=E_{2}(\theta)/W_{2}(\theta)$.
Since $n>k_{1}+k_{2}+2t$, we are guaranteed to have enough points. 
\end{proof}
\begin{rem}
Here we gave an algorithm in which the number of errors $t<(n-k_{1}-k_{2})/2$. As in standard BW algorithm, it is entirely possible to find algorithms
that can handle more errors. The above is sufficient for our purposes
so we leave such refinements for future work. 
\end{rem}
\begin{comment}
%
\begin{rem}
As in the standard BW algorithm, this algorithm works just as well
over finite fields, reals or complex numbers. As far as we know, the
performance with respect to approximate complex $f_{i}$'s is not
known. 
\end{rem}
In~\cite{aaronson2011computational,bouland2018quantum,movassagh2018efficient},
polynomial reconstruction schemes such as Berlekamp-Welch or Sudan's
list decoding have been deployed~\cite{welch1986error,sudan2000list}
to allow for extrapolations despite errors. These require that a fraction
of the tuples are \textit{exact} (the rest may be erroneous); which
is what one needs for such oracle-based proofs. Although more realistic
over finite fields (e.g.,~as in Reed-Solomon codes), we believe to
prove any tolerance with respect to errors in RCS one should altogether
drop these schemes.\textcolor{red}{{} Below a small robustness is proved
without appealing to such schemes. }
\end{comment}

\section{\label{sec:Close_to_Haar} Total variation distance from the Haar measure}

In this section, after introducing the Haar measure, in Lemma~\ref{lem:TVD_Haar}
we will prove that if $H=f(h)$ is a finite dimensional matrix drawn
from the Haar measure, then $Cf(\theta h)$ is $\Delta-$close in
total variational distance (TVD) to the Haar measure for $|1-\theta|\le\Delta\ll1$.
This lemma then will directly apply to each gate $C_{k}(\theta)$
and will be shown to prove $O(m\Delta)$ TVD of the full circuit $C(\theta)$
from $\mathcal{H}_{\mathcal{A}}$.

Recall that $\mathbb{O}(N)$, and $\mathbb{U}(N)$ denote the set
of orthogonal and unitary matrices respectively~\cite{movassagh2018efficient}.
The entries of these matrices are real ($\beta=1$), and complex ($\beta=2$)
respectively. In the special case that the determinant is equal to
one, these are denoted by $\mathbb{SO}(N)$ and $\mathbb{SU}(N)$.
If $G$ is any one of the matrix groups, then a \textit{uniform random
element} of $G$ is a matrix $V\in G$ whose distribution is\textit{
translation invariant}, which is called the Haar measure. This means that for any fixed $M\in G$,
\[
VM\stackrel{d}{=}MV\stackrel{d}{=}V,
\]
where $\stackrel{d}{=}$ is equality in the distribution sense. We
have the well-known theorem of Haar:
\begin{thm*}\label{Thm:Haar_Standard}
(Haar~\cite{haar1933massbegriff}) Let $G$ be any of $\mathbb{O}(N)$, $\mathbb{SO}(N)$,
$\mathbb{U}(N)$ or $\mathbb{SU}(N)$. Then there is a unique translation-invariant
probability measure on G.
\end{thm*}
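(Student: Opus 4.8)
The plan is to treat existence and uniqueness separately, using throughout that each $G\in\{\mathbb{O}(N),\mathbb{SO}(N),\mathbb{U}(N),\mathbb{SU}(N)\}$ is a \emph{compact} matrix group. Compactness is immediate: $G$ is closed in $\mathbb{C}^{N\times N}$, being cut out by the polynomial conditions $V^{\dagger}V=I$ (and $\det V=1$ in the special cases), and bounded, since $V^{\dagger}V=I$ forces $|V_{jk}|\le 1$; moreover a standard application of the regular value theorem shows $G$ is an embedded submanifold, hence a compact Lie group.

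\textbf{Existence.} I would build an invariant probability measure from a left-invariant volume form. Fix any nonzero alternating top-degree form $\omega_{e}$ on the tangent space $T_{e}G$ at the identity and set $\omega_{g}:=(L_{g^{-1}})^{*}\omega_{e}$, where $L_{g}:x\mapsto gx$. Since left translations are diffeomorphisms, $\omega$ is a smooth nowhere-vanishing left-invariant top form, and it defines a left-invariant Borel measure $\mu_{0}$ on $G$ by integration; compactness gives $0<\mu_{0}(G)<\infty$, so $\mu:=\mu_{0}/\mu_{0}(G)$ is a left-invariant probability measure. Alternatively, and more concretely for these classical groups, one can construct $\mu$ by sampling a matrix with i.i.d.\ standard complex (resp.\ real) Gaussian entries and outputting the unitary (resp.\ orthogonal) factor of its $QR$ decomposition, with a diagonal phase (resp.\ sign) correction that makes the factorization unique; invariance of the Gaussian law under $V\mapsto UV$ together with a.s.\ uniqueness of the corrected factor yields left translation invariance, and conditioning on the value of $\det$ handles $\mathbb{SU}(N)$ and $\mathbb{SO}(N)$.

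\textbf{Bi-invariance and uniqueness.} For $g\in G$ the pushforward $(R_{g})_{*}\mu$ under right translation $R_{g}:x\mapsto xg$ is again left-invariant and a probability measure, hence equals $c(g)\,\mu$ for some $c(g)>0$; the map $g\mapsto c(g)$ is a continuous homomorphism $G\to(0,\infty)$ whose image is a compact subgroup of $(0,\infty)$, so $c\equiv 1$ and $\mu$ is bi-invariant (equivalently $VM\stackrel{d}{=}MV\stackrel{d}{=}V$). For uniqueness, let $\mu,\nu$ be any two translation-invariant probability measures and $f$ continuous; consider $I:=\int_{G}\int_{G}f(xy)\,d\mu(x)\,d\nu(y)$. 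Integrating over $x$ first and using right-invariance of $\mu$ gives $I=\int_{G}f\,d\mu$; integrating over $y$ first (Fubini applies since $G\times G$ is compact and $f$ is bounded) and using left-invariance of $\nu$ gives $I=\int_{G}f\,d\nu$. Hence $\int f\,d\mu=\int f\,d\nu$ for all continuous $f$, and by the Riesz representation theorem $\mu=\nu$.

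\textbf{Main obstacle.} The content is entirely in Existence --- producing \emph{some} invariant measure. The volume-form route requires the differential-geometric facts that these groups are smooth manifolds and that a left-invariant top form integrates to a genuine finite Borel measure; the $QR$ route instead requires checking that the sign/phase-corrected Gram--Schmidt factor is well defined off a null set and that its distribution is unchanged by left multiplication --- routine, but this is where the real work lies. Everything downstream (unimodularity, the Fubini uniqueness argument) is soft and purely group-theoretic.
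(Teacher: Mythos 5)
The paper offers no proof of this statement at all: it is quoted as Haar's classical theorem, with a citation, and only its consequences (translation invariance of the measure, Weyl's eigenvalue density) are used later. So your proposal cannot be compared to an in-paper argument; judged on its own, it is the standard textbook proof for compact groups (compactness of the four matrix groups, an invariant volume form or Gaussian--QR construction for existence, the Fubini trick for uniqueness), and in outline it is sound.

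There is, however, one circular step as written. To upgrade left-invariance to bi-invariance you assert that $(R_{g})_{*}\mu$, being a left-invariant probability measure, ``hence equals $c(g)\,\mu$.'' Proportionality of left-invariant measures is precisely the uniqueness-up-to-scale statement you have not yet proved, so at that point nothing licenses the existence of $c(g)$. The repair is cheap and is already contained in your own Fubini computation: on a compact group, if $\mu$ is any left-invariant and $\nu$ any right-invariant Borel probability measure, then for continuous $f$ one has $\int f\,d\mu=\int_{G}\bigl(\int_{G}f(xy)\,d\mu(y)\bigr)d\nu(x)=\int_{G}\bigl(\int_{G}f(xy)\,d\nu(x)\bigr)d\mu(y)=\int f\,d\nu$, so every left-invariant probability measure coincides with every right-invariant one. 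Applying this with $\nu$ taken to be the pushforward of your $\mu$ under $x\mapsto x^{-1}$ (which is right-invariant) gives bi-invariance of the constructed measure, and the same identity gives uniqueness among translation-invariant measures in the paper's (two-sided) sense; the modular-character argument can simply be dropped. Alternatively, for these specific groups you can obtain bi-invariance directly by building the volume form from the Hilbert--Schmidt inner product on the Lie algebra, which is $\mathrm{Ad}$-invariant. Two smaller points: the left-invariant top-form construction should either note orientability (automatic for a Lie group, via that very form) or integrate the associated density; and for $\mathbb{SU}(N)$ ``conditioning on $\det$'' is conditioning on a null event (unlike $\mathbb{SO}(N)$, where $\det=\pm1$ each occur with probability $1/2$), so one should instead correct a Haar $\mathbb{U}(N)$ sample by an explicit determinant-compensating phase on one column, or just rely on the volume-form construction, which covers all four groups uniformly.
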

Since $H_k$ is independent of $C_{k}$, this Theorem implies that $C_{k}H_{k}$
is also Haar if $H_{k}$ is. So below we can focus on the distribution
of $f(\theta h)$ as compared with Haar. Recall that the Cayley function
and its inverse are
\[
f(x)=\frac{1+ix}{1-ix}\qquad;\qquad f^{-1}(x)=i\frac{1-x}{1+x}\quad,
\]
where $f(-\infty)=-1.$ We wish to calculate the TVD between the Haar
distribution induced by the unitary matrix $f(h)$ where as before
we have that $h=h^{\dagger}$, and the distribution induced by $f(\theta h)$
with $\theta\ll1$. In practice $\theta$ is taken to be smaller than
any constant $\theta=o(1)$.

We proceed by spectral decomposition of the unitary matrix $H$,
\[
H=f(h)=\sum_{\alpha=1}^{N}e^{ir_{\alpha}}|\psi_{\alpha}\rangle\langle\psi_{\alpha}|\quad,
\]
where we identified the eigenvalues $f(h_{\alpha})$ on the unit circle
with the phases $e^{ir_{\alpha}}$ with $r_{\alpha}\in[-\pi,\pi)$.
Solving $h_{\alpha}=f^{-1}(e^{ir_{\alpha}})$ we find that $h_{\alpha}=\tan(r_{\alpha}/2)$;
as expected the eigenvalues $h_{\alpha}$ that tend towards infinity
in magnitude correspond to phases $r_{\alpha}$ near $\pm\pi$.

Similarly, define the eigenvalues of $f(\theta h)$ to be the phases
$e^{i\nu_{\alpha}}$, and we have
\[
f(\theta h)=\sum_{\alpha=1}^{N}f(\theta h_{\alpha})|\psi_{\alpha}\rangle\langle\psi_{\alpha}|\equiv\sum_{\alpha=1}^{N}e^{i\nu_{\alpha}}|\psi_{\alpha}\rangle\langle\psi_{\alpha}|\quad.
\]
We solve $\theta h_{\alpha}=f^{-1}(e^{i\nu_{\alpha}})=\tan(\nu_{\alpha}/2)$
for $\nu_{\alpha}$ to find
\begin{equation}
\nu_{\alpha}(r_{\alpha})=2\arctan(\theta\tan(r_{\alpha}/2))\quad.\label{eq:nu_alpha}
\end{equation}
Comment: At $\theta=0$ all $e^{i\nu_{\alpha}}=1$, which corresponds
to $Cf(0)=CI=C$, and at $\theta=1$ we have $e^{i\nu_{\alpha}}=e^{ir_{\alpha}}$,
which corresponds to $Cf(h)=CH$ which is Haar distributed by the
above Theorem.

Total variation distance (TVD) between the continuous probability
distributions $\mu$ and $\rho$ is
\begin{eqnarray}
\left\Vert \mu-\nu\right\Vert _{\text{TVD}} & \equiv & \frac{1}{2}\int_{-\pi}^{\pi}dx_{1}\cdots\int_{-\pi}^{\pi}dx_{N}\text{ }|\mu(\boldsymbol{x})-\rho(\boldsymbol{x})|\quad,\label{eq:TVD}
\end{eqnarray}
The key lemma is:
\begin{lem}
\label{lem:TVD_Haar}If $H=f(h)$ is a unitary matrix distributed
according to the Haar measure, then the distribution over unitaries
$f(\theta h)$ for $|1-\theta|\le\Delta\ll1$ is $O(\Delta)$-close
to Haar in total variation distance. 
\end{lem}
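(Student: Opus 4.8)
The plan is to reduce the whole statement to a one-variable estimate about the reparametrization of eigenphases $g_\theta(r):=2\arctan(\theta\tan(r/2))$ that already appeared in Eq.~\eqref{eq:nu_alpha}. First I would invoke the Weyl integration formula for $\mathbb{U}(N)$: in the spectral parametrization $U=W\,\mathrm{diag}(e^{ir_1},\dots,e^{ir_N})\,W^\dagger$ (valid off a measure-zero set), Haar measure is the product of the invariant measure on the eigenframe $W$ and the eigenvalue density $\rho_N(\mathbf r)=\frac{1}{(2\pi)^{N}N!}\prod_{\alpha<\beta}|e^{ir_\alpha}-e^{ir_\beta}|^{2}$ on $[-\pi,\pi)^{N}$. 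The key point — already visible in Eq.~\eqref{eq:C_THETA} — is that $f(\theta h)$ has the \emph{same} eigenvectors $|\psi_\alpha\rangle$ as $H=f(h)$, and only its eigenphases move, $r_\alpha\mapsto\nu_\alpha=g_\theta(r_\alpha)$ coordinatewise. So in the spectral parametrization the map $H\mapsto f(\theta h)$ is $(W,\mathbf r)\mapsto(W,\Phi_\theta(\mathbf r))$ with $\Phi_\theta:=g_\theta^{\times N}$; its pushforward of Haar keeps the eigenframe factor unchanged, and since $\|\mu_1\otimes\lambda-\mu_1'\otimes\lambda\|_{\text{TVD}}=\|\mu_1-\mu_1'\|_{\text{TVD}}$, one gets $\|\mu_{f(\theta h)}-\mu_{\mathrm{Haar}}\|_{\text{TVD}}=\|(\Phi_\theta)_\ast\rho_N-\rho_N\|_{\text{TVD}}$. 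By the Haar-invariance already noted ($C H$ is Haar when $H$ is), the same bound transfers to $C\,f(\theta h)$.

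Second, I would record two elementary bounds on $g_\theta$, uniform in $r\in[-\pi,\pi)$ and valid for $|1-\theta|\le\Delta\le\tfrac12$. Writing $v=\tan(r/2)\in\RR$ one computes $g_\theta'(r)=\frac{\theta(1+v^{2})}{1+\theta^{2}v^{2}}$ and $\partial_\theta g_\theta(r)=\frac{2v}{1+\theta^{2}v^{2}}$, with $g_1=\mathrm{id}$. From $1+\theta^{2}v^{2}\ge 2\theta|v|$ one gets $|\partial_\theta g_\theta(r)|\le 1/\theta$, hence $|g_\theta(r)-r|=|g_\theta(r)-g_1(r)|\le|\ln\theta|=O(\Delta)$; and since $g_\theta'(r)$ is monotone in $v^{2}$ between the limiting values $\theta$ and $1/\theta$, one gets $|g_\theta'(r)-1|=O(\Delta)$, and therefore $|\det D\Phi_\theta(\mathbf r)-1|=|\prod_\alpha g_\theta'(r_\alpha)-1|=O(N\Delta)=O(\Delta)$ because $N\in\{2,4\}$ is a fixed constant. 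One also checks $g_\theta^{-1}=g_{1/\theta}$, so the same estimates apply to $\Phi_\theta^{-1}$. The only place that could cause trouble is $r\to\pm\pi$ (i.e.\ $h_\alpha\to\pm\infty$), but the two displayed formulas are perfectly regular there ($g_\theta(\pm\pi)=\pm\pi$, $g_\theta'(\pm\pi)=1/\theta$), so $g_\theta$ is in fact a $C^{1}$-diffeomorphism of the circle that is $C^{1}$-close to the identity.

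Third, I would combine these with the smoothness of $\rho_N$, which — being a trigonometric polynomial on the torus — is Lipschitz with constant $L_N$ depending only on $N$. Writing the pushforward density as $(\Phi_\theta)_\ast\rho_N=(\rho_N\circ\Phi_\theta^{-1})\,|\det D\Phi_\theta^{-1}|$ and applying the triangle inequality,
\[
2\,\|(\Phi_\theta)_\ast\rho_N-\rho_N\|_{\text{TVD}}\le\int\bigl|\rho_N-\rho_N\circ\Phi_\theta^{-1}\bigr|+\int(\rho_N\circ\Phi_\theta^{-1})\,\bigl|1-|\det D\Phi_\theta^{-1}|\bigr|.
\]
The first integral is at most $L_N\,(2\pi)^{N}\,\|\Phi_\theta^{-1}-\mathrm{id}\|_\infty=O(\Delta)$; the second is at most $\bigl\|1-|\det D\Phi_\theta^{-1}|\bigr\|_\infty\int\rho_N\circ\Phi_\theta^{-1}=O(\Delta)\cdot(1+O(\Delta))$, the last integral being $\int\rho_N\,|\det D\Phi_\theta|\le 1+O(\Delta)$ after the change of variables $\boldsymbol\nu=\Phi_\theta(\mathbf r)$. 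Summing the two pieces gives $\|(\Phi_\theta)_\ast\rho_N-\rho_N\|_{\text{TVD}}=O(\Delta)$, which is the lemma. The main obstacle is genuinely the first step: making rigorous the claim that only the eigenphase marginal matters, which rests on the Weyl integration formula together with the structural fact that the Cayley path $U_0 f(\theta h)$ leaves the eigenbasis of $f(h)$ fixed; once past that, the rest is one-variable calculus plus the harmless remark that all constants may depend on the fixed small integer $N$.
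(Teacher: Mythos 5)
Your proposal is correct and follows essentially the same route as the paper: both reduce to the Weyl eigenvalue density and analyze the coordinatewise eigenphase reparametrization $\nu_\alpha=2\arctan(\theta\tan(r_\alpha/2))$, bounding the Jacobian and the density perturbation by $O(\Delta)$. Your version merely spells out steps the paper leaves implicit (factoring out the eigenframe via independence, the Lipschitz bound on the Weyl density, and uniform control of $g_\theta$ including $r\to\pm\pi$), which is a welcome tightening rather than a different argument.
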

\begin{proof}
Denoting by $\boldsymbol{r}=(r_{1},\dots,r_{N})$, the distribution
over the eigenvalues of $H=f(h)$ is given by Weyl's seminar work
\cite{weyl1964symetrie},
\begin{equation}
d\mu(\boldsymbol{r})=\frac{(N!)^{-1}}{(2\pi)^{N}}\prod_{\alpha<\beta}|e^{ir_{\alpha}}-e^{ir_{\beta}}|^{2}\text{ }dr_{1}dr_{2}\cdots dr_{N}\label{eq:dMU-1}
\end{equation}
The distribution induced by $f(\theta h)$ is obtained by the change
of variables $r_{\alpha}\mapsto\nu_{\alpha}$ (see Eq.~\eqref{eq:nu_alpha}).
The corresponding distribution over $\boldsymbol{\nu}$ is denoted
by $\rho(\boldsymbol{\nu})$, whose density is
\begin{equation}
d\rho(\boldsymbol{\nu})=\frac{(N!)^{-1}}{(2\pi)^{N}}\prod_{\alpha<\beta}|e^{ir_{\alpha}(\nu_{\alpha})}-e^{ir_{\beta}(\nu_{\beta})}|^{2}\text{ }|\det J(\boldsymbol{\nu})|\text{ }d\nu_{1}d\nu_{2}\cdots d\nu_{N}\label{eq:dRHO-1}
\end{equation}
where $J(\boldsymbol{\nu})$ is the Jacobian of the matrix of transformation.
Since the change of variables $\nu_{\alpha}=2\arctan(\theta\tan(r_{\alpha}/2))$
results in a diagonal Jacobian matrix $J=\text{diag}(J_{1},\dots,J_{N})$,
we have
\[
|\det J(\boldsymbol{\nu})|=\prod_{\alpha=1}^{N}\left|\frac{\partial r_{\alpha}(\nu_{\alpha})}{\partial\nu_{\alpha}}\right|=\prod_{\alpha=1}^{N}\frac{1+\theta^{2}+\cos(r_{\alpha})(1-\theta^{2})}{2\theta}
\]
Jacobian is a local property and for $\theta\approx1$ it is very
well-behaved. Let $\Delta\ll1$ and the condition $|1-\theta|\le\Delta$
means $1-\Delta\le\theta\le1+\Delta$. The upper-bound on the Jacobian
of transformation is (since $N$ is a constant)
\[
|\det J(\boldsymbol{\nu})|\le\prod_{\alpha\in[N]}\frac{1+\Delta(1+\cos(r_{\alpha}))}{1-\Delta}=1+O(\Delta).
\]
Near $\theta=1$ we can let $\theta=1+\delta$ where $|\delta|\le\Delta$
and we have $\nu_{\alpha}(r_{\alpha})=2\arctan(\theta\tan(r_{\alpha}/2))=r_{\alpha}+\delta\sin(r_{\alpha})+O(\delta^{2})$.
Therefore,
\[
|e^{ir_{\alpha}(\nu_{\alpha})}-e^{ir_{\beta}(\nu_{\beta})}|^{2}\approx|e^{ir_{\alpha}}-e^{ir_{\beta}}|^{2}+O(\Delta).
\]
This along with the bound on Jacobian proves the claim  $\left\Vert \mu-\nu\right\Vert _{\text{TVD}}=O(\Delta)$.
\end{proof}

\section{\label{sec:RCS} Average-case hardness of Random Circuit Sampling}

We now turn our attention to the quantum complexity theory and the
application of the above for proving the hardness of sampling from
the output distribution of generic quantum circuits. Theorems~\ref{thm:MainResult-1}
and~\ref{thm:Robustness} together will prove the main theorem in
the manuscript. 

One can make formal the definition of a circuit architecture $\mathcal{A}$
(e.g.,~\cite{napp2019efficient}), however, it simply means the location
and layout of the circuit before specifying the actual local unitaries
(Fig.~\ref{fig:Architecture_C}). The architecture specifies on
which qubit(s) each $C_{k}$ applies and specifies the temporal order
of the application of the gates. The quantum circuit is then denoted
by $C_{{\cal A}}$, which for notational simplicity we denote by $C$.

One says that the circuit $C$ is \textit{generic} with respect to
the architecture ${\cal A}$ if the local unitaries $C_{k}$ are drawn
independently from the Haar measure.

\begin{comment}
The statement we prove in this section informally reads:
\begin{thm*}
(Informal statement of the theorem) There exists a low degree polynomial
$f(n)$ such that approximating to $O\left(2^{-f(n)}\right)$ additive
error of most amplitudes of most quantum circuits is a $\#P$-Hard
problem. 
\end{thm*}
\begin{rem}
The polynomial $f(n)$ is explicitly computed below (Subsection~\ref{subsec:Robustness})
and distinguishes the theorem from the stated conjecture. For Google's
quantum supremacy experiment~\cite{arute2019quantum} our approximation
becomes $2^{-f(n)}=2^{-\tilde{O}(n^{3})}$. Improving this to $O(2^{-n}/\text{poly}(n))$
would prove the supremacy conjecture. 
\end{rem}
\end{comment}

\subsection{Formal results for the exact average-case hardness}

For any circuit $C$, one can insert a complete set of basis between
each $C_{k}$ and $C_{k+1}$ and represent the circuit in what is
at times called ``Feynman path integral'' form. The amplitude corresponding
to the initial state $|y_{0}\rangle$ and final state $|y_{m}\rangle$
is
\begin{equation}
\langle y_{m}|C|y_{0}\rangle=\sum_{y_{1},y_{2},\dots,y_{m-1}\in\{0,1\}^{n}}\langle y_{m}|\mathcal{C}_{m}|y_{m-1}\rangle\langle y_{m-1}|\mathcal{C}_{m-1}|y_{m-2}\rangle\cdots\langle y_{1}|\mathcal{C}_{1}|y_{0}\rangle.\label{eq:Feynman}
\end{equation}

\begin{dfn}
(Haar random circuit distribution) Let ${\cal A}$ be an architecture
over circuits and let ${\cal H}_{{\cal A}}$ be the distribution over
circuits in ${\cal A}$ whose local gates, denoted by $H_{k}$, are
independently drawn from the Haar measure. 
\end{dfn}
The random circuit sampling is then the following task: 
\begin{dfn}
\label{def:Random-Circuit-Sampling}(Random Circuit Sampling (RCS))
Given an architecture $\mathcal{A}$, the description of a circuit
$C\in {\cal H_{A}}$, and parameters $\epsilon>0$ and $\delta>0$,
sample from the output probability distribution induced by $C$ with
probability $1-\delta$ over the choice of $C$. That is draw $y\in\{0,1\}^{n}$
with probability $\text{Pr}(y)=|\langle y|C|0\rangle|^{2}$ up to
a total variation distance $\epsilon$ in time $\text{poly}(n,1/\epsilon)$. 
\end{dfn}
In RCS one seeks estimations of $|\langle y|C|0^{n}\rangle|^{2}$
but any bit string $|y\rangle$ is obtained by applying Pauli $X$
matrices to positions in $|0^{n}\rangle$ that correspond to 1's.
By the so called 'hiding property'~\cite{aaronson2011computational},
which guarantees an equality of probabilities, it is \textit{sufficient}
to prove the hardness of computing
\begin{equation}
\text{p}_{0}(C)\equiv|\langle0^{n}|C|0^{n}\rangle|^{2}.\label{eq:p0}
\end{equation}

As stated in the manuscript, there exist local quantum circuits with
$n$ qubits whose probability amplitudes are $\#P$-Hard to estimate
to within $1/poly(n)$ multiplicative error~\cite{bremner2011classical,bremner2011classical,terhal2002adaptive}.
The specification of the worst-case circuits is not relevant for our
purposes, we make use of their existence only. The task is to extend
this hardness to generic circuits with the same architecture. 

Informally, the quantum supremacy conjecture in the context of RCS
states: Approximating most amplitudes to $O(2^{-n}/\text{poly}(n))$
additive error is a $\#P-$Hard problem for most quantum circuits.
Formally, it reads

\begin{conj}
\label{QSupremacyConj} (Quantum Supremacy Conjecture~\cite{aaronson2011computational})
There is no classical randomized algorithm that performs RCS in time
$\text{poly}(n,\epsilon^{-1})$ where $\epsilon$ is the total variation
distance error. 
\end{conj}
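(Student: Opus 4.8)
The plan is to argue by contradiction in the Stockmeyer template of Aaronson--Arkhipov, but feeding it the average-case $\#P$-hardness of $\mathrm{p}_0$ supplied by the Cayley-path reduction of this paper rather than an unproven average-case assumption. Suppose, for contradiction, that a classical randomized algorithm $\mathcal{S}$ performs RCS (Definition~\ref{def:Random-Circuit-Sampling}) in time $\mathrm{poly}(n,\epsilon^{-1})$. Fix $\epsilon=1/\mathrm{poly}(n)$ and $\delta$ a small constant, so $\mathcal{S}$ runs in $\mathrm{poly}(n)$ time and, for a $(1-\delta)$-fraction of $C\sim\mathcal{H}_{\mathcal{A}}$, samples from a distribution $\mathcal{D}_C$ with $\|\mathcal{D}_C-p(C)\|_{\mathrm{TVD}}\le\epsilon$. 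First I would apply Stockmeyer's approximate-counting theorem to the poly-time randomized process ``run $\mathcal{S}$ and output the indicator of seeing $y$'': this yields an $\mathrm{FBPP}^{\mathrm{NP}}$ algorithm returning $\tilde q_y$ with $|\tilde q_y-\mathcal{D}_C(y)|\le \mathcal{D}_C(y)/\mathrm{poly}(n)$, hence $|\tilde q_y-p_y(C)|\le p_y(C)/\mathrm{poly}(n)+2\epsilon$. By the hiding property (Eq.~\eqref{eq:p0}) it suffices to treat $y=0^n$, and by Markov's inequality a further $1/\mathrm{poly}(n)$-fraction of circuits is lost, leaving a $(1-\delta-o(1))$-fraction on which the estimate of $\mathrm{p}_0(C)$ is good to additive error $\approx 2^{-n}/\mathrm{poly}(n)$.

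Second I would convert additive to multiplicative error via anti-concentration. For the $\sqrt n\times\sqrt n$ grid at depth $d=\Theta(\sqrt n)$, Harrow--Mehraban~\cite{harrow2018approximate} establish $\Pr_{C\sim\mathcal{H}_{\mathcal{A}}}\!\left[\mathrm{p}_0(C)\ge c\,2^{-n}\right]\ge c'$ for absolute constants $c,c'>0$. On the intersection of this event with the ``good'' set from the first step---still a constant fraction---the additive error $2^{-n}/\mathrm{poly}(n)$ is a genuine $1/\mathrm{poly}(n)$ \emph{relative} error, so we obtain an $\mathrm{FBPP}^{\mathrm{NP}}$ procedure that, for a constant fraction of $C$, outputs $\mathrm{p}_0(C)$ to inverse-polynomial relative error. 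Third---the crux---I would plug this estimator into the worst-to-average-case reduction of Sections~\ref{sec:Close_to_Haar} and~\ref{sec:RCS}. Take a worst-case architecture-$\mathcal{A}$ circuit $C_{\mathrm{wc}}=\mathcal{C}_m\cdots\mathcal{C}_1$ with $\mathrm{p}_0(C_{\mathrm{wc}})$ hard to $1/\mathrm{poly}(n)$ relative error, form the Cayley path $C(\theta)$ of Eq.~\eqref{eq:CayleyPathSM}, so that $\mathrm{p}_0(\theta)$ is a rational function of degree $(2mN,2mN)$ with $\mathrm{p}_0(0)=\mathrm{p}_0(C_{\mathrm{wc}})$, and by Lemma~\ref{lem:TVD_Haar} the ensemble $\{C(\theta):|\theta-1|\le\Delta\}$ is $O(m\Delta)$-close to $\mathcal{H}_{\mathcal{A}}$. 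Sample $\mathrm{poly}(m)$ points $\theta_i\approx 1$, call the $\mathrm{FBPP}^{\mathrm{NP}}$ estimator on each $C(\theta_i)$, and run the generalized Berlekamp--Welch algorithm (Alg.~\ref{alg:BW_Manuscript}) to reconstruct $\mathrm{p}_0(\theta)$ exactly despite the constant fraction of ``bad'' $\theta_i$; evaluating at $\theta=0$ recovers the $\#P$-hard quantity $\mathrm{p}_0(C_{\mathrm{wc}})$. This would place $\#P\subseteq\mathrm{FBPP}^{\mathrm{NP}}$, hence (Toda) $\mathrm{PH}\subseteq\mathrm{BPP}^{\mathrm{NP}}$ and the polynomial hierarchy collapses to its third level---contradicting the hypothesis.

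The main obstacle---and the reason the statement is a conjecture and not a theorem---lies in the precision bookkeeping of the third step. Stockmeyer applied to a poly-time sampler only certifies additive error $2^{-n}/\mathrm{poly}(n)$ on $\mathrm{p}_0$, and anti-concentration upgrades this only to $1/\mathrm{poly}(n)$ relative error for the \emph{typical} magnitude $\mathrm{p}_0\sim 2^{-n}$. But the Berlekamp--Welch reconstruction of the degree-$\Theta(m^2)$ rational function $\mathrm{p}_0(\theta)$ is itself ill-conditioned: the denominator $\prod_k q_k(\theta)$ can be exponentially small, the interpolation runs over $\mathrm{poly}(m)$ error-corrected nodes, and to force the recovered rational function to be the correct one the per-node error must be pushed down to $\epsilon=2^{-\Omega(m^2)}$---exactly the robustness this paper proves, and far below what the sampler-plus-Stockmeyer pipeline supplies. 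Closing the gap would require either (i) an anti-concentration / signal-to-noise amplification strong enough to boost $2^{-n}/\mathrm{poly}(n)$-additive estimates into the regime where the rational-function extrapolation is stable, or (ii) a reduction avoiding low-degree algebraic extrapolation entirely. As the discussion after the main theorem observes, the classical algorithms of~\cite{napp2019efficient,huang2020classical} reaching $2^{-O(n)}$ additive error make option (i) look doubtful, so while the exact and $2^{-\Omega(m^2)}$-robust hardness proved here is strong evidence for the Conjecture, a full proof remains open.
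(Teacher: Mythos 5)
You are right not to claim a proof here: in the paper this statement is left as a conjecture, and your write-up essentially reconstructs the paper's own chain of reasoning — Stockmeyer plus hiding plus anti-concentration reduce hardness of sampling to additive estimation of $p_0$ at the $2^{-n}/\mathrm{poly}(n)$ scale, while the Cayley-path/Berlekamp--Welch/Paturi machinery in the paper certifies hardness only down to $2^{-\Omega(m^2)}$ additive error (Theorem~\ref{thm:Robustness}), so the conjecture remains open for exactly the quantitative reason you identify. One correction to your diagnosis of where the loss occurs: the rational function $p_0(\theta)$ has degree $O(m)$ (at most $(8m,8m)$), not $\Theta(m^2)$, and the denominator is not the culprit — near $\theta=1$ the paper shows $|Q(z)|^2 \le 1+O(Nm\Delta)\approx 1$, so no exponentially small $\prod_k q_k$ enters. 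The $2^{-\Omega(m^2)}$ instead comes from the Paturi extrapolation factor $e^{O(m(1+\Delta^{-1}))}$ in Eq.~\eqref{eq:Paturi} combined with the total-variation constraint $\Delta=o(m^{-1})$ of Lemma~\ref{lem:mTheta}: to stay close to $\mathcal{H}_{\mathcal{A}}$ you may only sample $\theta$ in a window of width $o(1/m)$ around $1$, and extrapolating a degree-$O(m)$ algebraic function from that window out to $\theta=0$ costs $e^{\Omega(m^2)}$. With that repair, your assessment — that closing the gap needs either far stronger noise tolerance in the reduction or a technique not based on low-degree algebraic extrapolation — coincides with the paper's own concluding discussion.
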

The informal and the formal definition are related via a celebrated
algorithm of Stockmeyer for approximate counting. Suppose sampling
was efficient on a classical computer, then we could use Stockmeyer's
algorithm to approximate the amplitudes to $1/poly(n)$ multiplicative
error. This would collapse the polynomial hierarchy to the third level. This is a contradiction if we assume that the PH does not collapse.  Since the probability amplitudes
anti-concentrate for random circuits, the expectation value of an
amplitude is $2^{-n}$ and the $1/poly(n)$ multiplicative error is
equivalent to $2^{-n}/poly(n)$ additive error. The contrapositive
to this logic is that: Assuming polynomial Hierarchy does not collapse
to the third level, then proving hardness (indeed $\#P$-Hardness)
of approximating the probability amplitudes of the random circuits
to $2^{-n}/poly(n)$ additive error implies that sampling is inefficient
(i.e.,~hard) for any classical computer. 
\begin{dfn}
\label{def:(Deformed-Haar)}($\theta$-deformed Haar towards $C$)
Let ${\cal A}$ be the architecture of the worst case circuit $C=\mathcal{C}_{m}\,\mathcal{C}_{m-1}\cdots\mathcal{C}_{2}\,\mathcal{C}_{1}$.
Let $\theta\in[0,1]$ and define by $C(\theta)=\mathcal{C}_{m}(\theta)\,\mathcal{C}_{m-1}(\theta)\dots\mathcal{C}_{2}(\theta)\,\mathcal{C}_{1}(\theta)$,
where by Eq.~\eqref{eq:CayleyPath} we have $\mathcal{C}{}_{k}(\theta)=C_{k}(\theta)\otimes\mathbb{I}$
and $C_{k}(\theta)\equiv C_{k}f(\theta h_{k})$. This path is unitary
for all $\theta$ as depicted in Fig.~\ref{fig:Plot-of-Cayley}. Further,
each $f(h_{k})=H_{k}$ is a (local) unitary drawn independently from
the Haar measure and $C_{k}$ is the local unitary of the worst case
circuit (i.e.,~$\mathcal{C}_{k}=C_{k}\otimes\mathbb{I}$). We define
by ${\cal H}_{{\cal A},C,\theta}$ the distribution over $C(\theta)$. 
\end{dfn}
By the translation invariance property of the Haar measure (Section
\ref{sec:Close_to_Haar}), $C_{k}(1)$ implements a local unitary
from the Haar measure. Therefore, the distribution over $C(1)$ coincides
with ${\cal H}_{{\cal A}}$, and $\mathcal{C}_{k}(0)=\mathcal{C}_{k}=C_{k}\otimes\mathbb{I}$,
which is the fixed $k^{th}$ gate of the worst-case circuit $C$.
In summary, $C(\theta)\in{\cal H}_{{\cal A},C,\theta}$ with extremes
(See Fig.~\ref{fig:CircuitDeformation}):
\begin{figure}
\centering{}\includegraphics[scale=0.45]{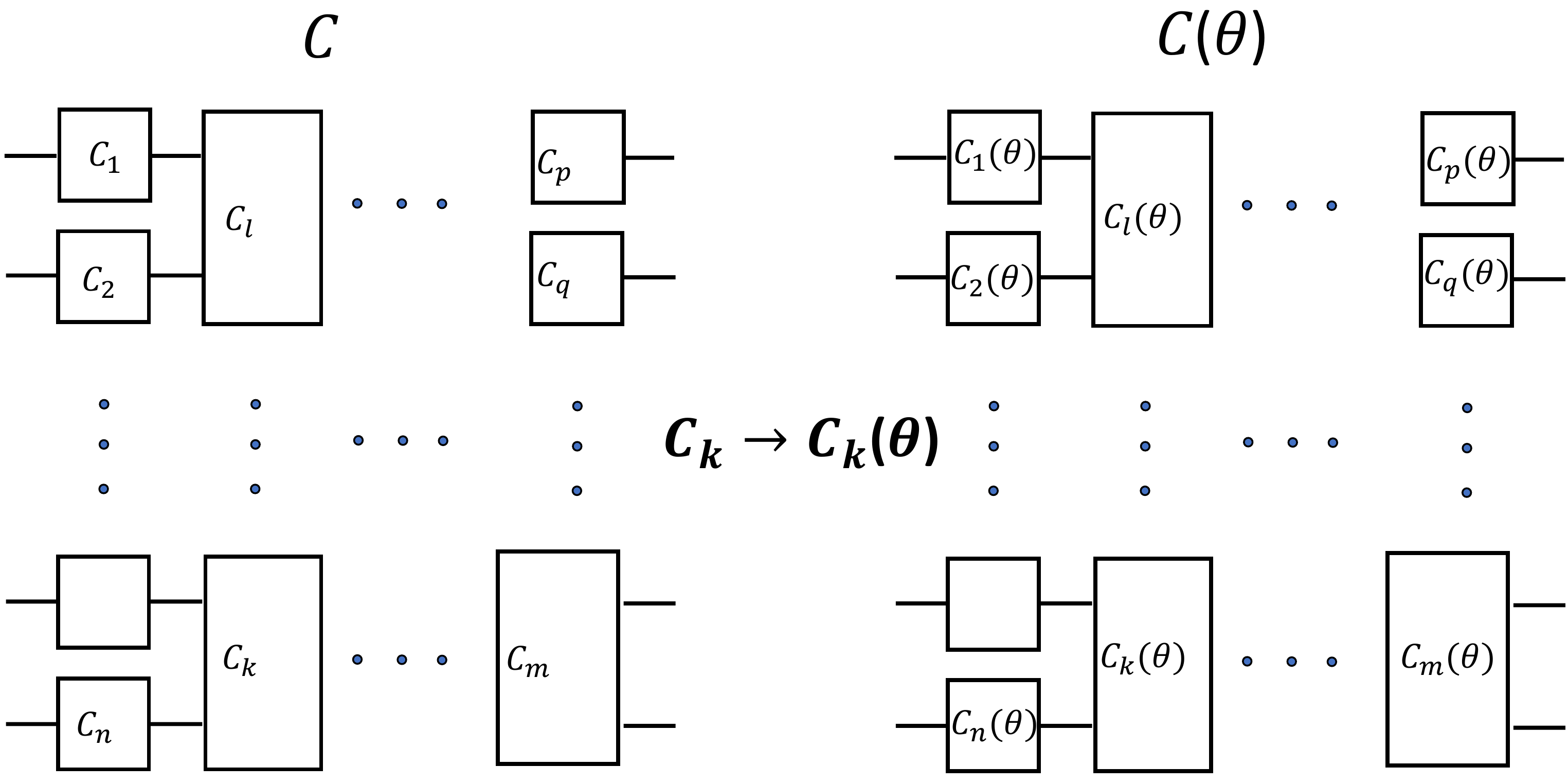}\caption{\label{fig:CircuitDeformation}Schematics of Definition~\ref{def:(Deformed-Haar)}:
The scrambling of the circuit $C$ to $C(\theta)$.}
\end{figure}
\begin{eqnarray}
\theta=0 & : & C_{k}(0)\quad\forall k\implies C(0)=C\quad\text{worst case circuit}.\label{Def:G}\\
\theta=1 & : & C_{k}(1)\quad\forall k\implies C(1)\in{\cal H}_{{\cal A}}\nonumber 
\end{eqnarray}
This naturally defines the deformation of Eq.~\eqref{eq:p0} via Eq.
\ref{eq:p0_theta-1}, which we recall is
\begin{equation}
p_{0}(\theta)\equiv|\langle0^{n}|C(\theta)|0^{n}\rangle|^{2},\label{eq:p0_theta}
\end{equation}
which at $\theta=1$ is the RCS problem and at $\theta=0$, we have
the \#$P$-Hard worst-case instance: $p_{0}(0)=p_{0}\equiv|\langle0^{n}|C|0^{n}\rangle|^{2}$.
\begin{lem}
\label{lem:mTheta}The total variation distance of ${\cal H_{A}}$
and ${\cal H}_{{\cal A},C,\theta}$ is $O(m\Delta)$ for $|1-\theta|\in\Delta\ll1$.
\end{lem}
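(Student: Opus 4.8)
The plan is to reduce Lemma~\ref{lem:mTheta} to the single-gate estimate already established in Lemma~\ref{lem:TVD_Haar} via a standard hybrid (telescoping) argument over the $m$ gates. First I would observe that the two distributions in question, ${\cal H}_{\cal A}$ and ${\cal H}_{{\cal A},C,\theta}$, are push-forwards of the product measure on the $m$ independent local Haar gates $(H_1,\dots,H_m)$: in the first case the circuit is $\mathcal{C}_m\cdots\mathcal{C}_1$ with $\mathcal{C}_k = H_k\otimes\mathbb{I}$, and in the second case each factor $H_k$ is replaced by $C_k f(\theta h_k)$ where $H_k = f(h_k)$. Because the map $(H_1,\dots,H_m)\mapsto \mathcal{C}_m\cdots\mathcal{C}_1$ is the same deterministic function of the tuple in both cases, it suffices to bound the total variation distance between the two joint laws on the tuple of gates, and TVD does not increase under a common deterministic map.

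Next I would set up the hybrid chain. For $j=0,1,\dots,m$ let $\mathcal{D}_j$ be the law of the tuple in which the first $j$ gates are the "deformed" gates $C_k f(\theta h_k)$ (for $k\le j$) and the remaining $m-j$ gates are genuine Haar gates $H_k$ (for $k>j$). Then $\mathcal{D}_0 = {\cal H}_{\cal A}$ (all gates Haar) and $\mathcal{D}_m = {\cal H}_{{\cal A},C,\theta}$ (all gates deformed), up to the common deterministic map above. By the triangle inequality for TVD,
\[
\left\Vert {\cal H}_{\cal A} - {\cal H}_{{\cal A},C,\theta}\right\Vert_{\text{TVD}} \le \sum_{j=1}^{m}\left\Vert \mathcal{D}_{j-1}-\mathcal{D}_j\right\Vert_{\text{TVD}}.
\]
Since $\mathcal{D}_{j-1}$ and $\mathcal{D}_j$ differ only in the law of the single $j$-th coordinate (the other coordinates being independent and identically distributed in the two hybrids), the TVD between them equals the TVD between the law of $H_j$ (Haar) and the law of $C_j f(\theta h_j)$ where $H_j=f(h_j)$. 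By translation invariance of Haar, $C_j f(\theta h_j)$ is distributed exactly as $f(\theta h_j)$ "rotated" by a fixed unitary, so the relevant distance is precisely the one controlled by Lemma~\ref{lem:TVD_Haar}, namely $O(\Delta)$ for $|1-\theta|\le\Delta$. Summing the $m$ identical bounds gives $O(m\Delta)$.

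The one point that needs a little care — and which I expect to be the main (though mild) obstacle — is justifying that the TVD between two product measures differing in exactly one factor equals the TVD between those factors: this is the standard fact that $\left\Vert \mu\otimes\pi - \mu'\otimes\pi\right\Vert_{\text{TVD}} = \left\Vert \mu-\mu'\right\Vert_{\text{TVD}}$, which follows from Fubini and the integral formula \eqref{eq:TVD}, and one must also confirm that the $C_k$-translation is measure-preserving so Lemma~\ref{lem:TVD_Haar} applies verbatim to $C_j f(\theta h_j)$ rather than just $f(\theta h_j)$ (this is exactly the remark already made just before the statement of Lemma~\ref{lem:TVD_Haar}). With those two elementary facts in hand the proof is a two-line telescoping bound; no further estimates on $\arctan$, Jacobians, or Weyl densities are needed beyond what Lemma~\ref{lem:TVD_Haar} already supplies.
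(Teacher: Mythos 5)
Your proof is correct and follows essentially the same route as the paper: a per-gate $O(\Delta)$ bound from Lemma~\ref{lem:TVD_Haar} combined with translation invariance of the Haar measure, then summation over the $m$ gates. The paper simply invokes ``additivity of TVD'' and invariance of the $\ell_1$ norm under unitary multiplication where you spell out the telescoping hybrid argument and the data-processing inequality, so your write-up is just a more explicit version of the same proof.
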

\begin{proof}
By the translational invariance of Haar measure (Subsection~\ref{sec:Close_to_Haar}),
if $H_{k}$ is distributed according to the Haar measure then so is
$C_{k}H_{k}$ for any fixed $C_{k}$. Moreover the $\ell_{1}$ norm
that defines total variation distance is invariant under unitary multiplication.
So it suffices to compare the measures over ${\cal H}_{{\cal A},C,\theta}$
and ${\cal H}_{{\cal A}}$, which by Lemma~\ref{lem:TVD_Haar} have
TVD of $O(\Delta)$ over a single local gate. By the additivity of
TVD, the distribution induced by $C(\theta)$ which is denoted by
${\cal H}_{{\cal A},C,\theta}$ has a TVD from ${\cal H}_{{\cal A}}$
that is $O(m\Delta)$ for any $\theta$ satisfying $|1-\theta|\in\Delta\ll1$. 
\end{proof}
\begin{rem}
\label{rem:6} In an $n-$qubit circuit, $m=\Omega(\text{poly}(n))$; therefore,
if we take $|1-\theta|\le\Delta$ with $\Delta=O(1/\text{poly}(n))$
such that $\Delta=o(m^{-1})$, then we are guaranteed that the TVD between
${\cal H}_{{\cal A}}$ and ${\cal H}_{{\cal A},C,\theta}$ vanishes
with $n$. For example, in Google's experiment there are $n$ qubits
on a grid of size $\sqrt{n}\times\sqrt{n}$ and depth $\sqrt{n}$
resulting in $m=n^{3/2}$. Now if we take $\Delta=O(\frac{1}{n^{3/2}\log{n}})$,
the total variational distance to $\mathcal{H}_{\mathcal{A}}$ becomes
$O(m\Delta)\le O(\log^{-1}n)$ 
\end{rem}
From Eqs.~\eqref{eq:Cayley_path} and \eqref{eq:q_x} we have that
$\mathcal{C}_{k}(\theta)=C_{k}(\theta)\otimes\mathbb{I}.$ Therefore,
$\langle0^{n}|C(\theta)|0^{n}\rangle$ is equal to
\begin{eqnarray}
\langle0^{n}|C(\theta)|0^{n}\rangle & = & \langle0^{n}|\prod_{k=1}^{m}\mathcal{C}_{k}(\theta)|0^{n}\rangle  \equiv  \frac{\langle 0^n|P(\theta)|0^n\rangle}{Q(\theta)}\label{eq:P_over_Q},\qquad\text{where}\\
P(\theta)&\equiv&\sum_{\alpha_{1},\dots,\alpha_{m}=1}^{N}\prod_{k=1}^{m}p_{k,\alpha_{k}}(\theta)\text{ }\left(C_{k}|\psi_{k,\alpha_{k}}\rangle\langle\psi_{k,\alpha_{k}}|\otimes\mathbb{I}_{\hat{k}}\right)\label{eq:rationality_C}\\
Q(\theta) & \equiv & \prod_{k=1}^{m}q_{k}(\theta)=\prod_{k=1}^{m}\prod_{\alpha_{k}=1}^{N}(1-i\theta h_{k,\alpha_{k}})\label{eq:Q_Theta}\\
p_{k,\alpha_{k}}(\theta) & \equiv & (1+i\theta h_{k,\alpha_{k}})\prod_{\beta_{k}\in[N]/\alpha_{k}}(1-i\theta h_{k,\beta_{k}}),\nonumber 
\end{eqnarray}
%\begin{eqnarray}
%\langle0^{n}|\prod_{k=1}^{m}\mathcal{C}_{k}(\theta)|0^{n}\rangle & = & \frac{1}{Q(\theta)}\sum_{\alpha_{1},\dots,\alpha_{m}=1}^{N}\langle0^{n}|\prod_{k=1}^{m}p_{k,\alpha_{k}}(\theta)\text{ }\left[\left(C_{k}|\psi_{k,\alpha_{k}}\rangle\langle\psi_{k,\alpha_{k}}|\right)\otimes\mathbb{I}_{\hat{k}}\right]|0^{n}\rangle\label{eq:rationality_C}\\
%Q(\theta) & \equiv & \prod_{k=1}^{m}q_{k}(\theta)=\prod_{k=1}^{m}\prod_{\alpha_{k}=1}^{N}(1-i\theta h_{k,\alpha_{k}})\label{eq:Q_Theta}\\
%p_{k,\alpha_{k}}(\theta) & \equiv & (1+i\theta h_{k,\alpha_{k}})\prod_{\beta_{k}\in[N]/\alpha_{k}}(1-i\theta h_{k,\beta_{k}}),\nonumber 
%\end{eqnarray}
where as before $h_{k,\alpha_{k}}$ and $|\psi_{k,\alpha_{k}}\rangle$
are the eigenpairs of the Hermitian matrix $h_{k}$, and $\mathbb{I}_{\hat{k}}$
denotes the trivial action of $C_{k}$ on all other qubits. $Q(\theta)$
is a polynomial of degree at most $Nm$ (recall that $N\in\{2,4\}$
for local quantum circuits). 

Since $\langle0^{n}|C(\theta)|0^{n}\rangle$ is a rational function
of degree at most $(4m,4m)$, we have that $p_{0}(\theta)$ as defined
by Eq.~\eqref{eq:p0_theta} is a rational function of degree at most
$(8m,8m)$. Moreover, from Eq.~\eqref{eq:rationality_C}, we have
\begin{equation}
p_0(P(\theta))\equiv\langle0^{n}|P(\theta)|0^{n}\rangle=Q(\theta)\text{ }\langle0^{n}|C(\theta)|0^{n}\rangle,\label{eq:D_theta}
\end{equation}
where $P(\theta)$ is a matrix whose entries are polynomials in
$\theta$. More importantly, $p_{0}(P(\theta))\equiv\langle0^{n}|P(\theta)|0^{n}\rangle$
is a polynomial of degree $Nm$ in $\theta$. 

\begin{comment}
\begin{rem}
\label{rem:BW_inadequacy}In the following theorem, we prove $\#P-$Hardness
of \textit{exact} RCS. By exact one means that in sampling the tuples
(near zero) $(\theta_{i},p_{0}(\theta_{i}))$ one commits no errors.
Besides the obvious physical infeasibility, this is quite unrealistic
from a theoretical perspective as well; when working over an infinite
field, numerical round-offs are inevitable. In the quantum supremacy
proofs that have so far been given~\cite{aaronson2011computational,bouland2018quantum,movassagh2018efficient},
polynomial reconstruction schemes such as Berlekamp-Welch or Sudan's
list decoding have been deployed~\cite{welch1986error,sudan2000list}.
These require that a fraction of the tuples are \textit{exact} (the
rest may be erroneous); which is what one needs for such oracle-based
proofs. Although realistic over finite fields (e.g.,~as in Reed-Solomon
codes), these are quite inadequate for working over the complex numbers.
It seems to me that to prove any tolerance with respect to errors
we should altogether drop these schemes. In Subsection~\ref{subsec:Robustness},
we prove a small robustness without appealing to such schemes. 
\end{rem}
\end{comment}

\begin{rem}
The procedure is that we are given a fixed worst case circuit $C$
with the architecture ${\cal A}$ and whose $m$ local gates (i.e.,~$C_{k}$'s) are published. We then draw a corresponding set of $m$
local gates independently from the Haar measure (i.e.,~$H_{k}$'s)
and treat them as fixed. The latter is a realization of an average-case
circuit with architecture ${\cal A}$. Generating $H_{1},\dots,H_{m}$
takes $O(m)$ time as each can be generated from the QR decomposition
of a complex random gaussian matrix of size at most $N=4$~\cite{movassagh2018efficient}.
Therefore, the description of the quantum circuit is classically
efficient. We then choose a set of $\theta_{i}$ such that $|1-\theta_{i}|\in[0,\Delta]$. 
\end{rem}
\begin{thm}
\label{thm:MainResult-1}Let ${\cal A}$ be an architecture such that
computing $p_{0}=|\langle0^{n}|C|0^{n}\rangle|^{2}$ is $\#P-$Hard
in the worst case. Then, it is $\#P$-Hard to output $|\langle0^{n}|H|0^{n}\rangle|^{2}$
with the probability $\alpha=3/4+1/poly(n)$ over the choice of circuits $H\in\mathcal{H}_{\mathcal{A}}$.
\end{thm}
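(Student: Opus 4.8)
The plan is to argue by contradiction via a worst-to-average-case reduction built on polynomial interpolation. Suppose there is a (randomized) classical algorithm $\mathcal{O}$ that, on input a description of a circuit $H \in \mathcal{H}_{\mathcal{A}}$, outputs $p_0(H) = |\langle 0^n|H|0^n\rangle|^2$ correctly with probability $\alpha = 3/4 + 1/\mathrm{poly}(n)$ over the choice of $H$. I want to use $\mathcal{O}$ to compute $p_0(C) = |\langle 0^n|C|0^n\rangle|^2$ for the fixed $\#P$-hard worst-case circuit $C$, which contradicts $\#P \subseteq \mathrm{BPP}$ (and hence forces a collapse of the polynomial hierarchy by Toda's theorem). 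First I would fix the worst-case gates $C_1,\dots,C_m$ and draw Haar-random local gates $H_k = f(h_k)$, forming the Cayley path $C(\theta) = \mathcal{C}_m(\theta)\cdots\mathcal{C}_1(\theta)$ with $\mathcal{C}_k(\theta) = C_k f(\theta h_k)\otimes\mathbb{I}$, so that $C(0) = C$ and $C(1) \in \mathcal{H}_{\mathcal{A}}$. By the discussion following Eq.~\eqref{eq:Q_Theta}, $\langle 0^n|C(\theta)|0^n\rangle = \langle 0^n|P(\theta)|0^n\rangle / Q(\theta)$ where $\langle 0^n|P(\theta)|0^n\rangle$ is a polynomial of degree at most $Nm$ and $Q(\theta) = \prod_k q_k(\theta)$ is a classically computable polynomial of degree at most $Nm$; hence $p_0(\theta) = |\langle 0^n|C(\theta)|0^n\rangle|^2$ is a rational function of degree at most $(8m,8m)$, and after clearing the (precomputable) denominator $|Q(\theta)|^2$ we obtain that $g(\theta) \equiv |Q(\theta)|^2 p_0(\theta) = |\langle 0^n|P(\theta)|0^n\rangle|^2$ is effectively a low-degree polynomial in $\theta$ (of degree $O(m)$) — though I will need to be mildly careful that, along a \emph{real} slice of $\theta$, $|\cdot|^2$ of a polynomial with complex coefficients is still a polynomial, which it is since $\bar\theta = \theta$ on $\mathbb{R}$.

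Next I would choose evaluation points $\theta_1,\dots,\theta_K$ with $K = O(m)$, each satisfying $|1-\theta_i| \le \Delta$ for a suitably small $\Delta = 1/\mathrm{poly}(n)$ chosen per Remark~\ref{rem:6} so that the TVD between $\mathcal{H}_{\mathcal{A},C,\theta_i}$ and $\mathcal{H}_{\mathcal{A}}$ is $o(1)$ by Lemma~\ref{lem:mTheta}. The key point is that, because each $C(\theta_i)$ is $o(1)$-close in TVD to a genuine sample of $\mathcal{H}_{\mathcal{A}}$, the oracle $\mathcal{O}$ succeeds on input $C(\theta_i)$ with probability at least $\alpha - o(1) = 3/4 + 1/\mathrm{poly}(n) - o(1)$; by a Markov/union-bound argument over the $K$ points (using linearity of expectation on the indicator of failure), with constant probability the \emph{number} of indices $i$ on which $\mathcal{O}$ errs is at most $t$ with $t < (K - \deg g)/2$, i.e. fewer than a quarter of the points — here I would set the constants so that $K > \deg g + 2t$ holds. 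I then call $\mathcal{O}$ on each $C(\theta_i)$, multiply each returned value by the known $|Q(\theta_i)|^2$ to get (mostly correct) evaluations of the polynomial $g$, and run the Berlekamp–Welch decoder (Algorithm~\ref{alg:BW_Manuscript}, in its classical polynomial instantiation, which is the $k_2 = 0$ special case) to recover $g(\theta)$ exactly despite the $t$ errors. Finally I evaluate the recovered rational function $p_0(\theta) = g(\theta)/|Q(\theta)|^2$ at $\theta = 0$ to obtain $p_0(C)$; since $Q(0) = 1 \ne 0$ this is well defined. This whole procedure runs in randomized polynomial time, so it places the $\#P$-hard quantity $p_0(C)$ in $\mathrm{BPP}$, and standard amplification plus Toda's theorem collapses $\mathrm{PH}$ to the third level — the contradiction that establishes the theorem.

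The main obstacle, and the step requiring the most care, is the probabilistic bookkeeping that turns ``$\mathcal{O}$ is right with probability $3/4 + 1/\mathrm{poly}(n)$ over $\mathcal{H}_{\mathcal{A}}$'' into ``$\mathcal{O}$ errs on at most $t < K/4$ of the interpolation points with good probability.'' Two subtleties feed into this: (i) the inputs $C(\theta_i)$ are not exactly Haar-distributed but only $O(m\Delta)$-close, so I must absorb the TVD slack into the success probability, which forces $\Delta = o(1/(mK))$ and hence $\Delta = 1/\mathrm{poly}(n)$ — this is exactly why the gap $3/4 + 1/\mathrm{poly}(n)$ (rather than $3/4$ on the nose) is needed; and (ii) the $K$ events ``$\mathcal{O}$ succeeds on $C(\theta_i)$'' are not independent (the gates $C_k, H_k$ are shared across all $\theta_i$), so I cannot use a Chernoff bound and must instead bound $\mathbb{E}[\#\text{errors}] \le (1/4 - 1/\mathrm{poly}(n))K$ and apply Markov's inequality to conclude $\#\text{errors} \le (1/4)K < (K-\deg g)/2$ with probability bounded away from zero — this in turn dictates choosing $K$ a constant factor larger than $\deg g = O(m)$. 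A secondary technical point is verifying that the algebraic-degree claim for $g(\theta)$ along the real $\theta$-slice is correct (it is, since on $\mathbb{R}$ we have $\overline{\langle 0^n|P(\theta)|0^n\rangle} = \langle 0^n|\overline{P}(\theta)|0^n\rangle$ with $\overline P$ the entrywise-conjugated polynomial, so the product is a degree-$2Nm$ polynomial), and that $Q(\theta)$ is never zero on the chosen grid, which I handle by noting its roots $\theta = -i/h_{k,\alpha}$ are non-real and thus avoided entirely on the real interval near $1$.
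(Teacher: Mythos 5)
Your overall architecture matches the paper's reduction: Cayley path with the worst case at $\theta=0$, the TVD bound of Lemma~\ref{lem:mTheta} to transfer the oracle's success probability to the correlated inputs $C(\theta_i)$, Markov's inequality rather than Chernoff (correct, since all $\theta_i$ share the same Haar draw), Berlekamp--Welch decoding, evaluation at $\theta=0$, and majority amplification. Your variant of clearing the classically computable denominator $|Q(\theta)|^2$ and decoding the polynomial $g(\theta)=|\langle 0^n|P(\theta)|0^n\rangle|^2$ with the ordinary (polynomial) Berlekamp--Welch is legitimate and is in fact anticipated by the paper's remark that the $q_k(\theta)$ are precomputable; the paper instead keeps the rational form and invokes its generalized Alg.~\ref{alg:BW_Manuscript}, but either route is fine for this step.

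The genuine gap is in the probabilistic bookkeeping. You set the decoder's error budget at $t\approx K/4$ with $K=O(m)$ evaluation points and claim that ``with constant probability'' the number of oracle errors is at most $t$. But with $\alpha=3/4+1/\mathrm{poly}(n)$ the expected number of errors is $(1/4-1/\mathrm{poly}(n))K$, essentially equal to your threshold, and Markov at a threshold equal to the mean yields only $\Pr[\#\text{errors}<K/4]\ge 1/\mathrm{poly}(n)$ --- an inverse-polynomial, not constant, per-run success probability. Your ``standard amplification'' then fails: when more than $t$ points are erroneous, Berlekamp--Welch can silently return a wrong but degree-consistent function, there is no way to verify a run (the value $p_0(0)$ is the $\#P$-hard quantity), and a $1/\mathrm{poly}(n)$-frequent correct answer need not win a majority vote. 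The paper's proof avoids this precisely by letting the decoder tolerate errors on just under \emph{half} of the points: it takes $L>4m/\delta=\mathrm{poly}(n)$ points (not $O(m)$), so that $L>k_1+k_2+2t$ holds with $t\approx(1-\delta)L/2$, and then Markov with mean $\le(1/4-\delta/2)L$ at threshold $(1-\delta)L/2$ gives $\Pr[|\Theta|\ge(1+\delta)L/2]\ge(1+\delta)/2>1/2$, which is what licenses repeating $O(\delta^{-2})$ times and taking the majority. As written, your threshold of a quarter of the points would require an oracle success probability around $7/8+1/\mathrm{poly}(n)$ to give a useful Markov bound, so your argument does not establish the theorem at the stated $\alpha=3/4+1/\mathrm{poly}(n)$; the fix is to raise the error budget to nearly $K/2$ and pay for it with $K=\mathrm{poly}(n)$ interpolation points, exactly as in the paper. (A minor further point: the TVD slack you need is $m\Delta$ small compared with the $1/\mathrm{poly}(n)$ advantage per call; requiring $\Delta=o(1/(mK))$ is stronger than necessary, though harmless.)
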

\begin{proof}
Let $C$ be an arbitrary circuit with architecture $\mathcal{A}$
and suppose we have at our disposal a classically efficient algorithm (an oracle) $\mathcal{O}$ such
that
\[
\underset{H\sim\mathcal{H}_{\mathcal{A}}}{\text{Pr}}[\mathcal{O}(H)=|\langle0^{n}|H|0^{n}\rangle|^{2}]\ge\alpha.
\]
Recall that under the Cayley path parametrization $p_{0}(\theta)=|\langle0^{n}|C(\theta)|0^{n}\rangle|^{2}$
is a rational function of degree at most $(8m,8m)$ such that $p_{0}(0)=|\langle0^{n}|C|0^{n}\rangle|^{2}$
and $p_{0}(1)=|\langle0^{n}|H|0^{n}\rangle|^{2}$ for some $H\in\mathcal{H}_{\mathcal{A}}$.
Divide the interval $I\equiv[1-\Delta,1+\Delta]$ into $L$ pieces
and for each $\theta_{i}\in I$ call the oracle $\mathcal{O}$ on
$C(\theta_i)$. Then using the generalized Berlekamp-Welch (Alg.~\ref{alg:(Berlekamp-Welch-for-Rational})
construct a $p'(\theta)$ such that
\[
p'(\theta_{i})=\mathcal{O}(C(\theta_{i}))\quad;\qquad|1-\theta_{i}|\le\Delta
\]
for $\alpha$ fraction of the $\theta_{i}\in I$. If the rational
function $p'_{0}(\theta)$ is not found output fail, otherwise output
% CHANGED H to C in the last equation
$p'_{0}(0)$ as the proposed value of $p_{0}(0)=|\langle0^{n}|C|0^{n}\rangle|^{2}$. 

We now show that the above algorithm succeeds with sufficiently high
probability over the choice of $H$ and we can then repeat the algorithm
a small number of times (at most $poly(n)$) on different $H$'s and
output the majority result for $p_{0}(0)$. 

Since in Lemma~\ref{lem:mTheta} we proved that the total variation
distance between $\mathcal{H}_{\mathcal{A},\theta}$ and $\mathcal{H}_{\mathcal{A}}$
is $O(m\Delta)$ for $\theta\in I$ we have that
% Is it p_0. or p'_0
\begin{eqnarray*}
\text{Pr}[\mathcal{O}(C(\theta_{i}))=p_{0}(\theta_{i})\:|\:\theta_{i}\in I] & \ge & \alpha-||\mathcal{H}_{\mathcal{A}}-\mathcal{H}_{\mathcal{A},\theta,C}||_{TVD}\\
 & = & \alpha-O(m\Delta)\quad.
\end{eqnarray*}
where we wish to make $\alpha-O(m\Delta)\ge3/4+\delta/2$. We accomplish
this by taking $\alpha>3/4$, and $\Delta<O(m^{-1})$ as in Lemma
\ref{lem:mTheta}. %
\begin{comment}
This procedure succeeds in outputting $p_{0}(\theta)$ because we
can then repeat the protocol $O(\delta^{-2})$ times and exponentially
supress the probability of failure on independent choices of $H\in\mathcal{H}_{\mathcal{A}}$
and output thte majority result as the answer to $p_{0}(0)$. Above
we found that we can take $\Delta=O(m^{-1}\log^{-1}(n))$. Therefore,
we can take $\alpha=\text{\ensuremath{\frac{1}{2}+O(\log^{-1}(n))}}$,
which ensures that $\delta=O(\log^{-1}(n))$. 
\end{comment}
For Berlekamp-Welch to succeed, $L$ needs to be bounded. Since in
Alg.~\ref{alg:(Berlekamp-Welch-for-Rational} $L>k_{1}+k_{2}+2t=8m+2t$
and from the above we have the error rate $t=(\frac{1}{2}-\delta)L$,
we conclude that $L>4m/\delta=poly(n)$. Let $\Theta$ be the set
of all $\theta_{i}\in I$ such that $\mathcal{O}(C(\theta_{i}))=p_{0}(\theta_{i})$;
 then 
%$\rho$ be the remaining $L-|\Theta|$ erroneous points. 
$L-|\Theta|$ is the number of erroneous points. By Markov's inequality
\[
\text{Pr}\left[|\Theta|\ge(1+\delta)\frac{L}{2}\right]=1-\text{Pr}\left[(L-|\Theta|)\ge(1-\delta)\frac{L}{2}\right]\ge1-\frac{\frac{1}{4}-\frac{\delta}{2}}{\frac{1}{2}-\frac{\delta}{2}}\ge(1+\delta)/2
\]
%\[
%\text{Pr}[|\Theta|\ge(\frac{1}{2}+\delta)L]=1-\text{Pr}[|\Theta|<(\frac{1}{2}+\delta)L]\ge1-\exp\left[-2L\left(\alpha-(\frac{1}{2}+\delta)\right)^{2}\right]>1/2
%\]
%for sufficiently large $L$, as $c\equiv\alpha-(\frac{1}{2}+\delta)>0$
%for any constant $\alpha>1/2$ and $\delta=o(1)$. 
We take $|\Theta|=(1+\delta)L/2$
in Berlekamp-Welch (Alg.~\ref{alg:(Berlekamp-Welch-for-Rational}),
which will succeed and outputs the rational function $p_{0}(\theta)$.
We are done because by repeating the call to the algorithm $O(\delta^{-2})=poly(n)$
times on different random circuits $C(\theta_{i})$, we can diminish
the error and take the majority to output $p_{0}(0)$. %
\end{proof}
We conclude that a polynomial time (efficient) classical computation of probability amplitudes implies $BPP=\#P$, which is believed to be highly unlikely.
\begin{rem}
%In the theorem we slightly improved on the previous results~\cite{aaronson2011computational,bouland2018quantum}
%from $\alpha\ge3/4+\delta$ to $\alpha>1/2$ by a simple application
%of Hoeffding's inequaltiy.
 As remarked in the original BosonSampling
paper~\cite{aaronson2011computational}, it is entirely possible
that the above theorem for RCS may be strengthened to allow for an
oracle with the success probability $\alpha\ge1/\text{poly}(n)$ using
the results in~\cite{cai1999hardness}.
\end{rem}

\subsection{\label{subsec:Robustness}Proof of robustness}

The algebraic reduction goes as follows. For any given $\epsilon>0$,
suppose we can efficiently evaluate the tuples to within some additive
error for the random instances of the quantum circuit with high enough
probability. That is given a $\theta_{i}$ near $\theta=1$ we assume
that we have the estimate $p_{0}(\theta_{i})+\epsilon_{i}$, where
$\epsilon_{i}$ is the additive error. We want to maximize the additive
error tolerance and still be able to extrapolate to the worst-case
instance at $\theta=0$. That is, we wish to obtain the largest $\epsilon=\max_{i}|\epsilon_{i}|$
tolerance we can, with an eye on the fact that $\epsilon=2^{-n}/poly(n)$
proves the quantum supremacy conjecture! 

In order to do so,  we need to control the poles of the rational function
$p_{0}(\theta)$. It will be useful to transform the coordinate to
$\theta=1+z$ to make the problem symmetric. We write
\[
C_{k}(\theta=1+z)=\sum_{\alpha=1}^{N}\frac{1+i(1+z)h_{\alpha}}{1-i(1+z)h_{\alpha}}C_{k}|\psi_{k,\alpha}\rangle\langle\psi_{k,\alpha}|\quad,
\]
where ultimately $|z|\le\Delta=o(1)$. We will proceed to interpolate
between $z:\;|z|\le\Delta$ which corresponds to unitary matrices
with a TVD of at most $O(\Delta)$ from the Haar measure, and the
point $z=-1$ which coincides with the worst-case. The poles of the
rational function are the zeros of $Q(z)$, which can easily be seen
from Eq.\textasciitilde\ref{eq:Q_Theta} to be at $\theta=-i/h_{k,\alpha}$,
which in turn correspond to the roots $z=-1-i/h_{k,\alpha}$.

We have $1\pm i\theta h_{k,\alpha}=1\pm i(1+z)h_{k,\alpha}$. Let
$1\pm ih_{k,\alpha}=r_{k,\alpha}e^{\pm iu_{k,\alpha}}$ (parametrizes
a line with the real part one) where $r_{k,\alpha}=\sqrt{1+h_{k,\alpha}^{2}}$,
and $u_{k,\alpha}=\arctan\left(h_{k,\alpha}\right)$. Therefore
\[
1\pm i\theta h_{k,\alpha}=r_{k,\alpha}\left[e^{\pm iu_{k,\alpha}}\pm iz\frac{h_{k,\alpha}}{r_{k,\alpha}}\right].
\]
We rewrite $C_{k}(\theta)$ in Eq. (9) and cancel out the magnitudes
$r_{k,\alpha}$ to obtain
\begin{eqnarray*}
C_{k}(z) & = & \frac{\sum_{\alpha=1}^{N}r_{k,\alpha}\left[e^{iu_{k,\alpha_{k}}}+iz\frac{h_{k,\alpha}}{r_{k,\alpha}}\right]\prod_{\beta\in[N]\backslash\alpha}r_{k,\beta}\left[e^{-iu_{k,\beta}}-iz\frac{h_{k,\beta}}{r_{k,\beta}}\right]\;C_{k}|\psi_{k,\alpha}\rangle\langle\psi_{k,\alpha}|}{\prod_{\alpha=1}^{N}r_{k,\alpha}e^{-iu_{k,\alpha}}\left[1-iz\frac{h_{k,\alpha}}{r_{k,\alpha}}e^{iu_{k,\alpha}}\right]}\\
 & = & \frac{\sum_{\alpha=1}^{N}\left[e^{iu_{k,\alpha}}+iz\frac{h_{k,\alpha}}{r_{k,\alpha}}\right]\prod_{\beta\in[N]\backslash\alpha}\left[e^{-iu_{k,\beta}}-iz\frac{h_{k,\beta}}{r_{k,\beta}}\right]\;C_{k}|\psi_{k,\alpha}\rangle\langle\psi_{k,\alpha}|}{\prod_{\alpha=1}^{N}e^{-iu_{k,\alpha}}\left[1-iz\frac{h_{k,\alpha}}{r_{k,\alpha}}e^{iu_{k,\alpha}}\right]}
\end{eqnarray*}
Now the Eqs. (19)-(22) write under $\theta=1+z$
\begin{eqnarray}
|\langle0^{n}|C(z)|0^{n}\rangle|^{2} & = & |\langle0^{n}|\prod_{k=1}^{m}\mathcal{C}_{k}(z)|0^{n}\rangle|^{2}\equiv\frac{|\langle0^{n}|P(z)|0^{n}\rangle|^{2}}{|Q(z)|^{2}},\qquad\text{where}\label{eq:P_over_Q-1}\\
P(z) & \equiv & \sum_{\alpha_{1},\dots,\alpha_{m}=1}^{N}\prod_{k=1}^{m}p_{k,\alpha_{k}}(z)\text{ }\left[\left(C_{k}|\psi_{k,\alpha_{k}}\rangle\langle\psi_{k,\alpha_{k}}|\right)\otimes\mathbb{I}_{\hat{k}}\right]\label{eq:rationality_C-1}\\
|Q(z)|^{2} & \equiv & \prod_{k=1}^{m}q_{k}(z)=\prod_{k=1}^{m}\prod_{\alpha_{k}=1}^{N}\left|1-iz\frac{h_{k,\alpha_{k}}}{r_{k,\alpha_{k}}}e^{iu_{k,\alpha_{k}}}\right|^{2}.\label{eq:Q_Theta-1}\\
p_{k,\alpha_{k}}(z) & \equiv & \left[e^{iu_{k,\alpha_{k}}}+iz\frac{h_{k,\alpha_{k}}}{r_{k,\alpha_{k}}}\right]\prod_{\beta_{k}\in[N]\backslash\alpha_{k}}\left[e^{-iu_{k,\beta_{k}}}-iz\frac{h_{k,\beta_{k}}}{r_{k,\beta_{k}}}\right],\nonumber 
\end{eqnarray}

It is important to note that, Since $\frac{h_{k,\alpha_{k}}}{r_{k,\alpha_{k}}}<1$
and $z\in\Delta=o(m^{-1})$, the factors in $|Q(z)|^{2}$ are close
to one
\[
\left|1-iz\frac{h_{k,\alpha_{k}}}{r_{k,\alpha_{k}}}e^{iu_{k,\alpha_{k}}}\right|^{2}=1+(2z+z^{2})\frac{h_{k,\alpha_{k}}^{2}}{1+h_{k,\alpha_{k}}^{2}}\le1+O(\Delta),
\]
which results in
\[
|Q(z)|^{2}\le\prod_{k=1}^{m}\prod_{\alpha_{k}=1}^{N}\le1+O(Nm\Delta).
\]
Therefore, so long that we obey the bound $\Delta=o(m^{-1})$ obtained
in the TVD calculation above, we have $|Q(z)|^{2}\approx1$.\\

In the manuscript we mentioned that the worst-case hardness is to
within a constant multiplicative error~\cite{dyer2004relative}.
So the reduction provides a robustness with respect to additive errors,
so long as $\epsilon_{i}$ are small enough for the extrapolation
to $z=-1$ to be within the hardness interval that hugs $p_{0}(z=-1)$.
How do errors near $z=0$ amplify when extrapolated to $z=-1$?
To address this we first state Paturi's lemma:
\begin{lem*}\nonumber
(Paturi~\cite{paturi1992degree}) Let $p(z)$ be a polynomial of degree $d$, and suppose $|p(z)|\le\epsilon$
for $|z|\le\Delta$. Then $p(-1)\le\epsilon\exp[2d(1+\Delta^{-1})]$. 
\end{lem*}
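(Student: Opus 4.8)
The plan is to reduce the statement to the classical extremal property of Chebyshev polynomials: among all polynomials of degree at most $d$ whose modulus is bounded by $1$ on $[-1,1]$, the Chebyshev polynomial $T_d$ is the one that grows fastest outside that interval, i.e. $|q(x_0)|\le |T_d(x_0)|$ for every real $x_0$ with $|x_0|\ge 1$. This is essentially Paturi's original argument. I would first recall this extremal fact; if a self-contained proof is desired it is a short zero-counting argument. Suppose some admissible $q$ violated it at a point $x_0$ with $|x_0|\ge 1$. Then $g(x):=T_d(x)-\lambda q(x)$ with $\lambda:=T_d(x_0)/q(x_0)$, which satisfies $|\lambda|<1$, inherits the sign of $T_d$ at each of the $d+1$ alternation points of $T_d$ in $[-1,1]$ (since $|\lambda q|<1=|T_d|$ there), producing at least $d$ interior sign changes and hence $d$ roots in $[-1,1]$, plus the additional root at $x_0$; that is $d+1$ zeros of a nonzero polynomial of degree at most $d$, a contradiction. (If instead $g\equiv 0$ then $T_d=\lambda q$ forces $1=|T_d(\xi)|=|\lambda|\,|q(\xi)|<1$ at an alternation point $\xi$, again impossible.)

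Next I would rescale. We may assume $\Delta\le 1$, since for $\Delta\ge 1$ the point $-1$ lies in $[-\Delta,\Delta]$ and the bound $|p(-1)|\le\epsilon\le\epsilon\exp[2d(1+\Delta^{-1})]$ is immediate. Put $q(t):=p(\Delta t)/\epsilon$; this has degree at most $d$, and the hypothesis $|p(z)|\le\epsilon$ for $|z|\le\Delta$ is exactly $|q(t)|\le 1$ for $|t|\le 1$. Applying the extremal inequality at $x_0=-1/\Delta$ (which has $|x_0|\ge 1$) and using $|T_d(-y)|=T_d(y)$ together with $T_d(y)\ge 1>0$ for $y\ge 1$, we obtain
\[
|p(-1)| \;=\; \epsilon\,\bigl|q(-1/\Delta)\bigr| \;\le\; \epsilon\,\bigl|T_d(-1/\Delta)\bigr| \;=\; \epsilon\,T_d(1/\Delta).
\]

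It then remains to estimate $T_d(1/\Delta)$. From the representation $T_d(y)=\cosh\!\big(d\,\operatorname{arccosh} y\big)=\tfrac12\big[(y+\sqrt{y^2-1})^d+(y-\sqrt{y^2-1})^d\big]$, valid for $y\ge 1$, one gets $T_d(y)\le (y+\sqrt{y^2-1})^d\le (2y)^d$, so $T_d(1/\Delta)\le (2/\Delta)^d=\exp\!\big[d\ln(2/\Delta)\big]$. Finally the elementary bounds $\ln 2\le 1$ and $\ln(1/\Delta)\le 1/\Delta-1$ (from $\ln t\le t-1$) give $\ln(2/\Delta)\le 1/\Delta\le 2(1+\Delta^{-1})$, whence $|p(-1)|\le \epsilon\exp[2d(1+\Delta^{-1})]$, as claimed. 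I expect the only genuine content to be the Chebyshev extremal property; the rescaling and the bound on $T_d(1/\Delta)$ are routine. The main decision point is whether to cite the extremal property as standard or to include the zero-counting proof above, and to be careful with the trivial case $\Delta\ge 1$ and with the parity factor $(-1)^d$ when passing from $T_d(-1/\Delta)$ to $T_d(1/\Delta)$.
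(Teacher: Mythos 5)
Your proposal is correct, but note that the paper does not prove this lemma at all: it is imported as a citation to Paturi, with only the one-line remark that the traditional statement bounds $p(+1)$ for $|x|\le\Delta$ and that the reflection $z=-x$ yields the version used here. What you have written is, in effect, the standard proof behind the cited result: the Chebyshev comparison $|q(x_0)|\le|T_d(x_0)|$ for $|x_0|\ge1$ (your zero-counting/alternation argument is the classical one and is sound, including the $g\equiv0$ branch), the rescaling $q(t)=p(\Delta t)/\epsilon$, the trivial case $\Delta\ge1$, the parity identity $|T_d(-y)|=T_d(y)$, and the estimate $T_d(1/\Delta)\le(2/\Delta)^d\le\exp[2d(1+\Delta^{-1})]$. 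Your bound on $T_d$ is looser than Paturi's $T_d(1+\mu)\le e^{2d\sqrt{2\mu+\mu^2}}$ but plainly sufficient for the stated constant, so nothing is lost for the application. Two small points worth making explicit if this were to be inserted as a self-contained proof: (i) the alternation/sign argument presumes $p$ is real-valued on the real axis; that holds in the paper's application, where $p$ is a difference of quantities of the form $|\langle0^{n}|P(z)|0^{n}\rangle|^{2}$, and for a genuinely complex-coefficient $p$ one should first rotate by a phase $e^{i\phi}$ and pass to the real part so that the comparison still applies at $x_{0}$; (ii) the contradiction step needs $|x_{0}|>1$ (the case $|x_{0}|=1$ is vacuous since $|T_d(\pm1)|=1$), which is consistent with your separate treatment of $\Delta\ge1$. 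With those caveats the argument is complete and matches the external source the paper relies on.
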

Comment: Traditionally Paturi's lemma takes $|x|\le\Delta$ and bounds $p(+1)$. Let $z=-x$ and the above version is recovered.

Assume we have a classical algorithm ${\cal O}_{2}$ that to within
additive error $\epsilon=\max_{i}|\epsilon_{i}|$ has the property:
\[
\text{Pr}[|{\cal O}_{2}(C(z_{i}))-p_{0}(z_{i})|\le\epsilon]=1-1/\text{poly}(n)\:,\qquad |z_{i}|\le\Delta
\]
Namely, we are guaranteed to have a set of $(z_{i},|\langle 0^n|P(z_i)|0^n\rangle|^2+\epsilon_{i}|Q(z_{i})|^{2})$
with high probability. Now since degree of $|\langle 0^n|P(z)|0^n\rangle|^2$ is at most
$16m$, using Paturi's lemma, Rakhmanov's result, and the above bound on $|Q(z)|^2$
we have
\begin{equation}
\left||\langle 0^n|P_{exact}(-1)|0^n\rangle|^2-|\langle 0^n|P_{noisy}(-1)|0^n\rangle|^2\right|\le\epsilon(1+o(1))\;e^{32m(1+\Delta^{-1})}.\label{eq:Paturi}
\end{equation}
For example,, taking the parameters in Google's experiment,
we have $m=n^{3/2}$, we choose $\Delta=O(n^{-3/2}\log^{-2}(n))$
to ensure closeness in TVD as before (see Remark~\ref{rem:6}). 
\begin{rem}
In practice, we call the classical oracle some $poly(n)$ number of
times to obtain $(\theta_{i},p_{0}(\theta_{i}))$, which combined
with BW algorithm enables us to construct the rational function if
the rate of errors in evaluation is not too high. In using Paturi's
lemma there is a subtle question: Can the difference of the exact
and sampled polynomials be drastically different in $|1-\theta|\in[0,\Delta]$
despite agreeing well at the sampled points $(\theta_{i},p(\theta_{i}))$
(i.e.,~difference upper-bounded by $\max_{i}\epsilon_{i}|Q(\theta_{i})|^{2}$)?
This is not hard to remedy. If one samples $\theta_{i}$ uniformly
in the interval of length $\Delta$ near $\theta=1$, then by a theorem
due to Rakhmanov we are also guaranteed that the two polynomials are
close to one another everywhere in that interval~\cite{rakhmanov2007bounds}.
\end{rem}
\begin{thm}
\label{thm:Robustness}Assuming access to an oracle ${\cal O}_{2}$
as described above, it is $\#P$-Hard to compute $\text{p}_{0}(C(\theta))$
over ${\cal H}_{{\cal A}}$ to within $\epsilon=2^{ -\Omega\left(m^2\right)} $
additive error. 
\end{thm}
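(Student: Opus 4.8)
The plan is to re-run the worst-to-average-case reduction of Theorem~\ref{thm:MainResult-1} in the shifted coordinate $\theta=1+z$, but now feed the (noisy) oracle values into a noise-tolerant interpolation and then extrapolate via Paturi's lemma, exactly as encapsulated in Eq.~\eqref{eq:Paturi}. Fix an architecture $\mathcal{A}$ for which $p_0=|\langle0^n|C|0^n\rangle|^2$ is $\#P$-hard in the worst case, fix such a worst-case circuit $C=\mathcal{C}_m\cdots\mathcal{C}_1$, and draw the Haar gates $H_k=f(h_k)$; then $C(z)$ with $|z|\le\Delta$ is a $\Delta$-deformation towards $C$ of a genuine $\mathcal{H}_{\mathcal{A}}$-instance. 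By Eqs.~\eqref{eq:P_over_Q-1}--\eqref{eq:Q_Theta-1} we have $p_0(1+z)=g(z)/|Q(z)|^2$ with $g(z)\equiv|\langle0^n|P(z)|0^n\rangle|^2$ a polynomial of degree at most $16m$, and $|Q(z)|^2=1+O(m\Delta)$ uniformly on $|z|\le\Delta$. The target $z=-1$ is $\theta=0$, the worst case, and there $g(-1)=|Q(-1)|^2\,p_0(0)$ with $|Q(-1)|^2=\prod_{k,\alpha}(1+h_{k,\alpha}^2)^{-1}$ classically computable from the (chosen) Haar gates and, by a routine concentration argument for the near-i.i.d.\ eigenphases, of size $e^{-O(m)}$ with high probability; hence recovering $g(-1)$ recovers the $\#P$-hard quantity $p_0(0)$.

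Concretely, choose $\Delta=\Theta(1/m)$ with a small enough constant that the total variation distance $O(m\Delta)$ of Lemma~\ref{lem:mTheta} between $\mathcal{H}_{\mathcal{A}}$ and $\mathcal{H}_{\mathcal{A},C,\theta}$ is below $1/8$; sample $L=cm$ points $z_1,\dots,z_L$ independently and uniformly from $[-\Delta,\Delta]$ for a large constant $c$; query $\mathcal{O}_2$ on each $C(z_i)$, precompute $|Q(z_i)|^2$, and set $f_i\equiv|Q(z_i)|^2\,\mathcal{O}_2(C(z_i))$, which whenever $\mathcal{O}_2$ succeeds equals $g(z_i)+\epsilon_i|Q(z_i)|^2$ with $|\epsilon_i|\le\epsilon$. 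Transferring the oracle's success probability $1-1/\text{poly}(n)$ across the TVD bound and applying Markov's inequality as in Theorem~\ref{thm:MainResult-1}, with constant probability over the Haar gates at most a $1/4$-fraction of the $z_i$ are corrupted (have $|f_i-g(z_i)|>\epsilon(1+o(1))$) while on the rest $|f_i-g(z_i)|\le\epsilon(1+o(1))$. Since $L>16m+2t$ for $t\le L/4$ once $c>32$, a noise-tolerant variant of Berlekamp--Welch (Alg.~\ref{alg:(Berlekamp-Welch-for-Rational} specialized to polynomials, combined with an $\ell_1$/least-squares fit on the samples it certifies as uncorrupted) returns a degree-$\le16m$ polynomial $\tilde g$ agreeing with $g$ to within $O(\epsilon)$ at those samples, and since the samples are uniform on $[-\Delta,\Delta]$, Rakhmanov's estimate~\cite{rakhmanov2007bounds} upgrades this to $\|\tilde g-g\|_{\infty,\,|z|\le\Delta}\le\epsilon(1+o(1))$.

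Finally, apply Paturi's lemma to the degree-$\le16m$ polynomial $\tilde g-g$: this is Eq.~\eqref{eq:Paturi}, giving $|\tilde g(-1)-g(-1)|\le\epsilon(1+o(1))\,e^{32m(1+\Delta^{-1})}$, which with $\Delta=\Theta(1/m)$ is $\epsilon\,e^{\Theta(m^2)}$; dividing by the known factor $|Q(-1)|^2=e^{-O(m)}$ turns this into an estimate of $p_0(0)$ with error $\epsilon\,e^{\Theta(m^2)}$. Since the worst-case instance is $\#P$-hard to a constant multiplicative error~\cite{dyer2004relative}, and hence (by integrality of the underlying $\#P$ quantity) to $2^{-\text{poly}(n)}$ additive error --- a margin dominated by $m^2$ for the grid architectures of interest --- taking $\epsilon=2^{-\Omega(m^2)}$ with a large enough constant makes $\tilde g(-1)$ land inside the hardness interval around $p_0(0)$; repeating the whole construction $\text{poly}(n)$ times on fresh Haar gates and taking the majority boosts the constant success probability to $1-2^{-\text{poly}(n)}$. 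Thus a $\text{poly}(n)$-time algorithm for $\mathcal{O}_2$ would place a $\#P$-hard function in $\mathrm{BPP}$, so computing $p_0$ over $\mathcal{H}_{\mathcal{A}}$ to $2^{-\Omega(m^2)}$ additive error is $\#P$-hard. I expect the main obstacle to be the middle step: one must reconstruct $g$ while simultaneously absorbing a constant fraction of wholly-wrong oracle answers \emph{and} the $\epsilon$-scale noise on the good ones, and --- because Paturi amplifies any interval-wide discrepancy by $e^{\Theta(m^2)}$ --- the reconstruction error must be controlled \emph{uniformly} on $[-\Delta,\Delta]$, not merely at the samples, which is precisely why uniform sampling plus Rakhmanov's bound must replace naive Lagrange interpolation; a secondary point is checking that the target $z=-1$ avoids all poles of $p_0$ so that the extrapolation genuinely reaches the worst-case value.
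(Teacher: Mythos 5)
Your proposal is correct and follows essentially the same route as the paper: the shift to $\theta=1+z$, the bound $|Q(z)|^{2}=1+O(m\Delta)$, reconstruction from noisy samples near $z=0$, Rakhmanov to pass from the sample points to the whole interval, Paturi's lemma (Eq.~\eqref{eq:Paturi}) to extrapolate to $z=-1$, Dyer et al's constant-multiplicative worst-case hardness, and the TVD constraint $\Delta=O(1/m)$ from Lemma~\ref{lem:mTheta} yielding $\epsilon=2^{-\Omega(m^{2})}$. You in fact make explicit two points the paper leaves implicit --- the division by the efficiently computable $|Q(-1)|^{2}$ to recover $p_{0}(0)$, and the simultaneous handling of outlier oracle calls and $\epsilon$-scale noise in the reconstruction --- which is a fair and accurate reading of where the argument's remaining work lies.
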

\begin{proof}
The proof follows from Eq.~\eqref{eq:Paturi}. From Dyer et al's result~\cite{dyer2004relative} we have hardness
guarantees to within constant multiplicative errors. We take $\epsilon=2^{-\Omega(n)}\exp(-O(m\Delta^{-1}))$ to ensure that the extrapolation to $z=-1$ lies to within $2^{-\Omega(n)}$ additive error of the $\#P$-Hard instance. The requirement of TVD implies that $\exp(-O(m\Delta^{-1}))=2^{ -\Omega\left(m^2\right)}$.
\end{proof}

Our scheme is resilient to noise $\epsilon=\exp(-\Omega\left(n^{3}\right))$
for parameters being used in the experimental setting of Google in
which the circuit is $\sqrt{n}\times\sqrt{n}\times\sqrt{n}$~\cite{arute2019quantum}.
And this scheme is resilient to noise of $\epsilon=\exp(-\Omega\left(n^{2}\right))$
for constant-depth proposals in which the geometry is $\sqrt{n}\times\sqrt{n}\times O(1)$
and therefore $m=O(n)$; these are classically hard to simulate
in the worst case~\cite{terhal2002adaptive} and recent numerical
advances seems to indicate that they are hard to simulate on average
for error thresholds similar to what we obtain and not those required
by the quantum supremacy conjecture (i.e.,~$\exp(-n)/\text{poly}(n)$)
\cite{napp2019efficient}. 

Note that the oracles ${\cal O}$ and ${\cal O}_{2}$ above do \textit{not}
need to succeed with probabilities $1/2+1/\text{poly}(n)$ and $1-1/\text{poly}(n)$
respectively over \textit{all} circuits with the given architecture.
For example, although ${\cal O}_{2}$ succeeds with probability close
to one, it is required to do so over circuits distributed close to
${\cal H}_{{\cal A}}$. 
%Also to prove robustness we did not use Berlekamp-Welch
%algorithm as is\textcolor{blue}{{} customary because of its inherent
%limitations discussed in Remark}~\ref{rem:BW_inadequacy}.

In Lemmas such as Paturi's, the source of error $\epsilon$  is abstracted
away. It may be the sampling imprecisions, finite
precision in the inferred polynomial coefficients or what not.

\subsection{\label{subsec:Inadequacy-of-Taylor}Inadequacy of Taylor series truncation}

A proof of the average-case hardness of a non-unitary approximation
of RCS based on the truncation of the Taylor series expansion was
given by Bouland et al~\cite{bouland2018quantum}. In this section
and in what follows we adapt the setup and notation of Bouland et
al~\cite{bouland2018quantum}. Their reduction is fundamentally different
from ours. For example, among others, they follow the convention (e.g.,
in BosonSampling paper~\cite{aaronson2011computational}) and place
the worst-case instances at $\theta=1$ and average-case instances at $\theta\approx0$.
We do the opposite, which gives a path that is simpler and involves
the product of only two unitaries and not three; this also leads to
better TVD of the circuits from $\mathcal{H}_{\mathcal{A}}$.

In~\cite{bouland2018quantum} for each local gate $C_{j}$ of the
worst case circuit one picks a corresponding Haar matrix $H_{j}$
and forms the interpolating local gate $C_{j}(\theta)\equiv C_{j}H_{j}e^{-ih_{j}\theta}$,
where $\exp(-ih_{j})=H_{j}^{\dagger}$. The full unitary circuit is
$C(\theta)=C_{m}(\theta)C_{m-1}(\theta)\cdots C_{1}(\theta)$ and
$C(0)\in\mathcal{H}_{\mathcal{A}}$, and $C(1)=C$ is the worst case
circuit. However, the interpolation is only useful if it leads to
a low degree polynomial function, where the previous arsenal developed
for permanents in BosonSampling could be used~\cite{aaronson2011computational}.
To meet this requirement, they truncate the Taylor series expansion
of $\exp(-ih_{j}\theta)$ at the $K^{\text{th}}$ order~\cite{bouland2018quantum}.
Since the exponential function has an infinite power series, in order
to obtain a polynomial, they truncate the Taylor series expansion
of $\exp(-ih_{j}\theta)$ at the $K^{\text{th}}$ order
\begin{figure}
\centering{}\includegraphics[scale=0.5]{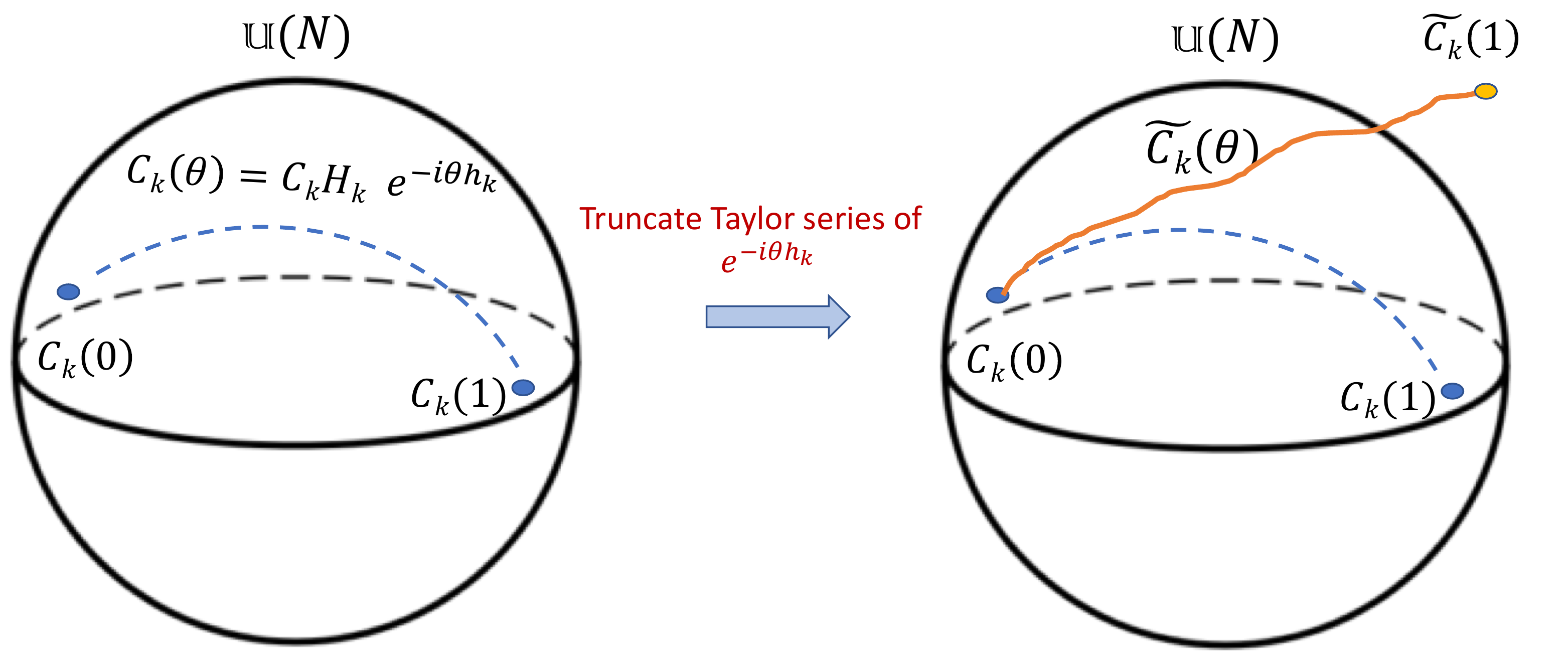}\caption{\label{fig:NonUnitary} Schematics of the deviation from the geodesic
and unitarity that results from the truncation of the Taylor series
(Eq.~\eqref{eq:Truncated}). Note that the path leaves the unitary
group as $\theta$ tends towards $\theta=1$.}
\end{figure}
\begin{equation}
\widetilde{C_{j}}(\theta)=C_{j}H_{j}\left\{ \sum_{k=0}^{K}\frac{(-ih_{j}\theta)^{k}}{k!}\right\} .\label{eq:Truncated}
\end{equation}
Note that the truncation leads to non-unitary local gates, and consequently
a non-unitary circuit. Therefore, to make the reduction of the worst-case
hardness to average work using Lipton's reduction (which is based
on polynomial reduction), they assume there exists a classical algorithm
$O$ that takes as input the non-unitary truncated description of
the circuit (compare with $O$ in Fig.~\ref{fig:Reduction-idea}).
They then succeed in proving the hardness based on this new complexity
theoretical assumption (see~\cite{napp2019efficient} for more discussion).
In summary, they need to \textit{assume an oracle with respect to
a non-unitary circuit}. Also a robustness of $\mathcal{O}(\exp(-\text{poly}(n)))$
with respect to additive error was claimed, which presupposes the
use of the, not so natural, non-unitary oracle access~\cite{bouland2018quantum}.

In order to reduce the complexity of the $\#P$-Hard problem to average
case, one needs to assume that there is an oracle that exactly computes
$\text{p}_{0}(C(0))$ where the local gates are Haar distributed.
The first issue with a non-unitary circuit is that this oracle cannot
be called. So they assume a different oracle that exactly computes
$\text{p}_{0}(\widetilde{C}(0))$. Then the claim is that the extrapolations
(i.e.,~$\text{p}_{0}(\widetilde{C}(0))$ ) is sufficiently close to
$\text{p}_{0}(C)$. This easily leads to the bound (as shown in~\cite{bouland2018quantum}),
\begin{equation}
|\text{p}_{0}(\widetilde{C}(1))-\text{p}_{0}(C)|\le\frac{2^{O(mn)}}{K!}\approx e^{O(mn-K\ln K)}.\label{eq:TrucationError}
\end{equation}

They use the above construction and to obtain a robustness with respect
to noise of $O(\exp(-\text{poly}(n)))$. Their robustness proof relies
on Paturi's lemma~\cite{paturi1992degree} and Rakhmanov's bounds
\cite{rakhmanov2007bounds}. Let the polynomial $p(\theta)=\text{p}_{0}(\widetilde{C}(\theta))-\text{p}_{0}(C(\theta))$,
then Paturi's lemma says 
\begin{lem*}
(Paturi) Let $p(\theta)$ be a polynomial of degree $d$, and suppose
$|p(\theta)|\le\epsilon$ for $|x|\le\Delta$. Then $p(1)\le\epsilon\exp[2d(1+\Delta^{-1})]$. 
\end{lem*}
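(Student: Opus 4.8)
The plan is to deduce the estimate from the classical extremal growth property of Chebyshev polynomials; apart from that one input, the argument is a rescaling followed by two elementary estimates. I may assume $\Delta\in(0,1)$, since for $\Delta\ge 1$ the point $\theta=1$ already lies in $[-\Delta,\Delta]$ and $|p(1)|\le\epsilon$ is immediate. (Since in our application $p$ has real coefficients and $\theta$ is real, I work with real polynomials; the complex-coefficient case would follow by treating real and imaginary parts separately.) First I would normalize: set $q(x)\equiv p(\Delta x)/\epsilon$, so that $q$ is a real polynomial of degree at most $d$ with $|q(x)|\le 1$ for all $x\in[-1,1]$, and the quantity of interest becomes $|p(1)|=\epsilon\,|q(1/\Delta)|$ with $y\equiv 1/\Delta>1$. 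It therefore suffices to bound $|q(y)|$ at the single point $y>1$, uniformly over all degree-$d$ polynomials that are bounded by $1$ on $[-1,1]$.

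The crux — and the step I expect to require the most care to write cleanly — is the Chebyshev growth lemma: if $\deg q\le d$ and $|q(x)|\le 1$ on $[-1,1]$, then $|q(y)|\le T_d(|y|)$ for every real $y$ with $|y|\ge 1$, where $T_d$ is the Chebyshev polynomial of degree $d$. I would either cite this or reproduce its short proof: evaluate $T_d-q$ at the $d+1$ extremal nodes $x_k=\cos(k\pi/d)$, where $T_d(x_k)=(-1)^k$; since $|q(x_k)|\le 1$, the values $(T_d-q)(x_k)$ weakly alternate in sign, which — after dispatching the degenerate cases with Chebyshev's theorem that $2^{d-1}$ is the largest possible leading coefficient among normalized degree-$d$ polynomials bounded by $1$ on $[-1,1]$ — forces $T_d-q$ to have all of its at most $d$ zeros inside $[-1,1]$. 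Hence $T_d-q$ keeps a fixed sign on $(1,\infty)$, and comparing leading behavior as $x\to+\infty$ yields $|q(y)|\le T_d(y)$ for $y>1$; the case $y<-1$ follows from $|T_d(-y)|=|T_d(y)|$ applied to $x\mapsto q(-x)$.

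It then remains to estimate $T_d$ at $y=1/\Delta$. Using the identity $T_d(y)=\tfrac12\big[(y+\sqrt{y^2-1})^d+(y-\sqrt{y^2-1})^d\big]\le (y+\sqrt{y^2-1})^d$ for $y\ge 1$, together with the crude bound $y+\sqrt{y^2-1}\le 2y=2/\Delta$, this gives $T_d(1/\Delta)\le (2/\Delta)^d$. Converting to the stated exponential form, $(2/\Delta)^d=\exp\!\big[d(\ln 2-\ln\Delta)\big]\le\exp\!\big[2d(1+\Delta^{-1})\big]$, since $\ln 2\le 2$ and $-\ln\Delta\le\Delta^{-1}$ for $\Delta\in(0,1)$. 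Combining everything, one obtains $|p(1)|=\epsilon\,|q(1/\Delta)|\le\epsilon\,T_d(1/\Delta)\le\epsilon\exp[2d(1+\Delta^{-1})]$, which is the asserted bound on $p(1)$.

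If one prefers to avoid the node-counting bookkeeping, an equivalent route is a complex-analytic Bernstein--Walsh argument: extend $q$ to $\CC$ and apply the maximum principle on the ellipse with foci $\pm 1$ passing through $y=1/\Delta$, whose semi-axis parameter is $\rho=1/\Delta+\sqrt{1/\Delta^2-1}\le 2/\Delta$; since $|q|\le 1$ on $[-1,1]$ this gives $|q(1/\Delta)|\le\rho^d\le(2/\Delta)^d$, the identical rate. Either way the lemma is obtained in the form used for the robustness bounds above.
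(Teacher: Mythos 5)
Your proof is correct, but note that the paper itself does not prove this lemma at all: it is imported verbatim as a citation to Paturi's 1992 paper, so there is no in-paper argument to compare against. What you have reconstructed is essentially the standard (and, in substance, Paturi's original) proof: rescale to $q(x)=p(\Delta x)/\epsilon$ bounded by $1$ on $[-1,1]$, invoke the Chebyshev extremal-growth comparison $|q(y)|\le T_d(|y|)$ for $|y|\ge 1$, and then bound $T_d(1/\Delta)\le\bigl(1/\Delta+\sqrt{1/\Delta^{2}-1}\bigr)^{d}\le(2/\Delta)^{d}\le\exp[2d(1+\Delta^{-1})]$; your constant bookkeeping ($\ln 2\le 2$, $\ln(1/\Delta)\le\Delta^{-1}$) is right, as is the trivial dispatch of $\Delta\ge 1$. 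Two small remarks. First, your sketch of the comparison lemma via weak alternation of $T_d-q$ at the extremal nodes does gloss over the tie/degenerate cases; the cleaner classical route is the contradiction argument with $r=\lambda q-T_d$, $\lambda=T_d(y_0)/q(y_0)$, which gives strict alternation and $d+1$ roots of a degree-$d$ polynomial — but since the growth lemma is classical you may simply cite it, and your Bernstein--Walsh alternative handles complex coefficients directly without the $\sqrt{2}$ detour of splitting real and imaginary parts. Second, the version actually used in the paper's robustness argument bounds $p(-1)$ from data on $|z|\le\Delta$; as the paper's own comment notes, this follows from your statement by the substitution $z\mapsto-z$, so nothing further is needed.
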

In general, the \textit{robustness} claims correspond to the supremum
of $\epsilon$.

The robustness claims would be fine for the truncated circuits if
they also applied to an actual (i.e.,~unitary) circuit. This means
that if we call a standard oracle $O$ that takes in the exact unitary
gates as an input, then it would also give rise to additive errors
of $O(\exp(-\text{poly}(n)))$. Let us take the error to be only due
to truncation (Eq.~\eqref{eq:TrucationError}) and ignore all other
imperfections such as errors resulting from noisy polynomial sampling,
numerical round offs, or experimental limitations etc. Indeed one
can treat $K$ as a free variable and make it a sufficiently large
polynomial to compensate for $mn$ in the exponent of Eq.~\eqref{eq:TrucationError}.
This would lead to exponentially small errors in computing Eq.~\eqref{eq:TrucationError}
as stated in~\cite{bouland2018quantum}. Then, in order to sample
from distributions near the Haar measure, in Paturi's lemma they take
$\Delta=1/\text{poly}(n)$ as an \textit{independent} free variable.

However, the non-unitary approximation of the circuit is $\widetilde{C}(\theta)=\widetilde{\mathcal{C}}_{m}(\theta)\widetilde{\mathcal{C}}_{m-1}(\theta)\cdots\widetilde{\mathcal{C}}_{1}(\theta)$,
which has $m$ gates each truncated at $K^{th}$ order. Since the
degree of the products of polynomials is the sum of their degrees,
the entries of $C(\theta)$ become polynomials of degree $mK$; consequently
$|\langle0^{n}|\widetilde{C}(\theta)|0^{n}\rangle|^{2}|$ is a polynomial
of degree $2mK.$ Therefore, the degree in Paturi's lemma is $d=2mK$
and we have (by Eq.~\eqref{eq:TrucationError} and Paturi's lemma)
\begin{equation}
p(1)\le\exp[O(mn-K\ln K)]\exp[4mK(1+\Delta^{-1})].\label{eq:BlowUpTruncation}
\end{equation}
This is the fundamental observation: we see that $K$ and $d$ are
\textit{dependent}. For the errors not to blow up, one needs to take
$K\ge O(\exp(4m(1+\Delta^{-1})))=O(\exp(\text{poly}(n)))$ for the
right-hand side of Eq.~\eqref{eq:BlowUpTruncation} not to blow up-- the polynomial
would need to have an \textit{exponentially large degree in $n$,}
rendering a unitary oracle based scheme inefficient. This contradicts
the assumption of the existence of an efficient classical algorithm in
the reduction (c.f. Fig.~\ref{fig:Reduction-idea})

Ref.~\cite{bouland2018quantum} therefore, proceeds by assuming
a \textit{non-unitary} oracle, which for $\theta\ll1$ outputs $\text{p}_{0}(\widetilde{C}(\theta))$
exactly and with high probability. This is somewhat unnatural as the
oracle would take as inputs a classical description of a non-unitary
``circuit'' because of the truncation of the Taylor series to the
$K^{\text{th}}$ order (see Eq.~\eqref{eq:Truncated}). Under this
new complexity theoretical assumption, they succeed in proving that
approximating $\text{p}_{0}(\widetilde{C}(1))=|\langle0^{n}|\widetilde{C}(1)|0^{n}\rangle|^{2}$
to within $\exp(-\text{poly}(n))$ additive error is hard. 
\begin{figure}
\begin{centering}
\includegraphics[scale=0.4]{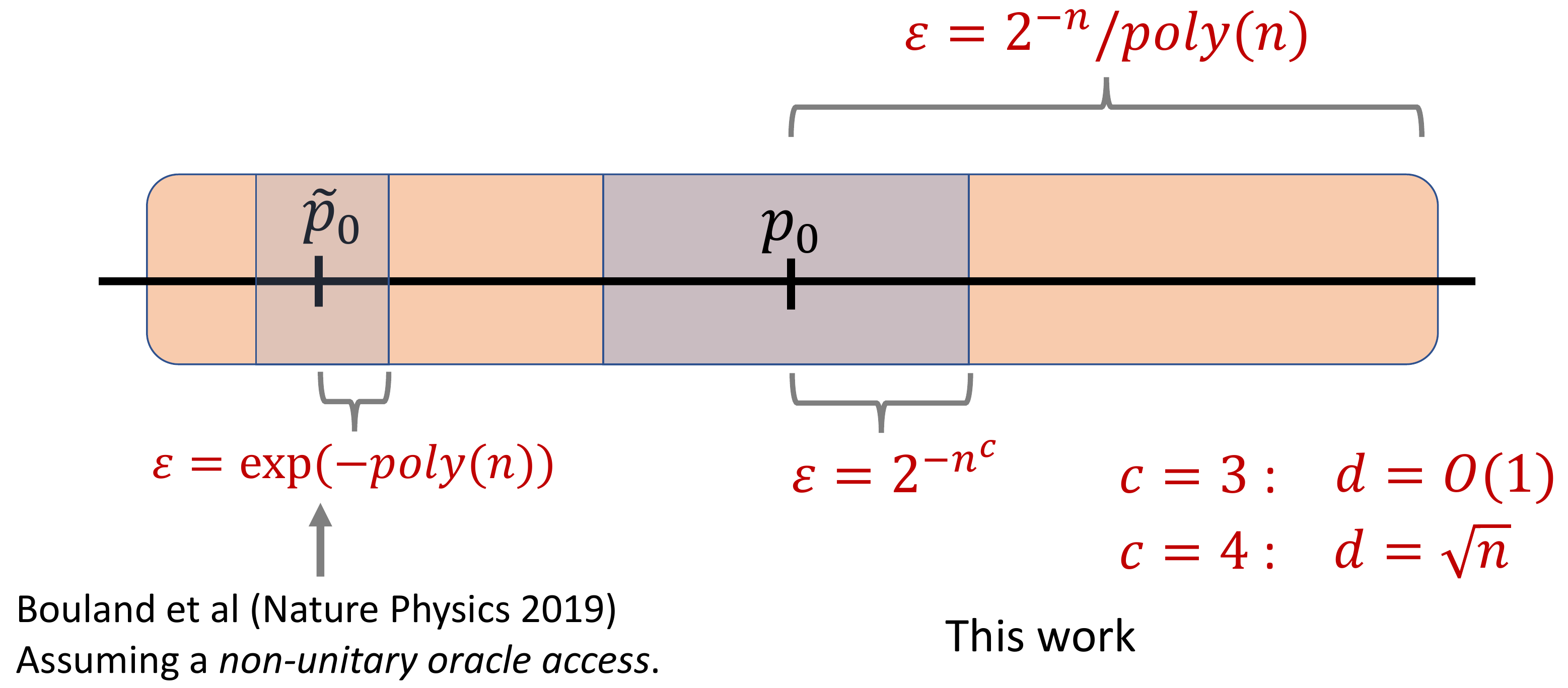}\caption{\label{fig:This-work-vs-Bouland}This work vs. the previous work~\cite{bouland2018quantum}
that assumes an access to a non-unitary oracle. Moreover, in this
section we showed that the non-unitary oracle does not provide robustness
if the input to it is the standard description of the (unitary) circuit. }
\par\end{centering}
\end{figure}

Bouland et al proved
that the TVD of the geodesic path, denoted by $\mathcal{H}_{\mathcal{A},\theta}$
is $O(m\theta)$ from $\mathcal{H}_{\mathcal{A}}$.~\cite{bouland2018quantum} . The truncated
path however is not a unitary and therefore has eigenvalues that do
not lie on the unit circle and Weyl's formula~\cite{weyl1964symetrie}
does not apply. Therefore, the analysis in~\cite{bouland2018quantum}
 does not readily extend in proving a small TVD between the distribution
over the non-unitary approximation of circuits obtained from truncation
of the Taylor series and $\mathcal{H}_{\mathcal{A}}$. But we like
to emphasize that in operator norm sense the truncation error obtained
from non-unitary approximation is indeed exponentially small. 

In summary their nice work makes an extra complexity theoretical assumption
that presupposes a non-unitarity oracle access. We summarize this
in comparison to our findings in Fig.~\ref{fig:This-work-vs-Bouland}.
We also note that any point $p$ in the interval $[p_{0}-2^{-n}/poly(n),p_{0}+2^{-n}/poly(n)]$
that satisfies exact
$\#P$-Hardness can be used to prove the supremacy conjecture if one
can prove a hardness for an interval encapsulating $p$ that is large
enough to encompass the supremacy interval. In this section we proved
that $\tilde{p}_{0}$, obtained from Taylor series truncation, will
not do. The reason is that the $\#P$-Hardness cannot be claimed with
a unitary oracle access.

\end{document}